\titleformat{\section}[hang]{\color{black}\filcenter}{}{0pt}{\large \bf \sf \thesection.\quad \uppercase{#1}}%
\titleformat{\subsection}[hang]{\bfseries}{}{0pt}{\large \sf \thesubsection. \quad{#1}}
\newcommand{\single}{\renewcommand{\baselinestretch}{1.0}\normalsize}
\newcommand{\double}{\renewcommand{\baselinestretch}{1.63}\normalsize}
\newcommand{\bc}{\begin{center}}
\newcommand{\ec}{\end{center}}
\newcommand{\no}{\noindent}
\theoremstyle{plain}
\newtheorem{thm}{\protect\theoremname}
  \theoremstyle{remark}
  \theoremstyle{plain}
  \newtheorem{lem}[thm]{\protect\lemmaname}
\def\LyX{\texorpdfstring{%
  L\kern-.1667em\lower.25em\hbox{Y}\kern-.125emX\@}
  {LyX}}
\newcommand{\mylabel}[2]{#2\def\@currentlabel{#2}\label{#1}}
\newcommand{\vertiii}[1]{{\left\vert\kern-0.25ex\left\vert\kern-0.25ex\left\vert #1 
    \right\vert\kern-0.25ex\right\vert\kern-0.25ex\right\vert}}
\title{Statistical Analysis of Longitudinal Data on Riemannian Manifolds}
\let\Title\@title
  \providecommand{\lemmaname}{Lemma}
  \providecommand{\remarkname}{Remark}
\providecommand{\theoremname}{Theorem}
\providecommand{\corname}{Corollary}
\def\references{\bibliography{manifold,fda}}
\def\ci{\cite}
\def\cp{\citep}
\newcommand{\be}{\begin{eqnarray}}
\newcommand{\ee}{\end{eqnarray}}
\newcommand{\bsp}{\begin{split}}
\newcommand{\esp}{\end{split}}
\newcommand{\ed}{\end{document}}
\newcommand{\btab}{\begin{tabular}}
\newcommand{\etab}{\end{tabular}}
\newcommand{\la}{\label}
\def\bco{\iffalse}
\global\long\def\expect{\mathbb{E}}
\global\long\def\prob{\mathrm{Pr}}
\global\long\def\real{\mathbb{R}}
\global\long\def\Op{O_{P}}
\global\long\def\manifold{\mathcal{M}}
\global\long\def\op{o_{P}}
\global\long\def\asympeq{\asymp}
\global\long\def\diffop{\mathrm{d}}
\newcommandx\tangentspace[2][usedefault, addprefix=\global, 1=\manifold]{T_{#2}#1}
\global\long\def\dim{d}
\newcommandx\lpnorm[3][usedefault, addprefix=\global, 1=r, 2=]{\|#3\|_{\mathcal{L}^{#1}}^{#2}}
\newcommandx\lp[1][usedefault, addprefix=\global, 1=p]{\mathcal{L}^{#1}}
\global\long\def\transpose{\mathrm{T}}
\global\long\def\innerprod#1#2{\langle#1,#2\rangle}
\global\long\def\Log{\mathrm{Log}}
\global\long\def\Exp{\mathrm{Exp}}
\newcommandx\vfnorm[3][usedefault, addprefix=\global, 1=\mu, 2=]{\|#3\|_{#1}^{#2}}
\newcommandx\vfinnerprod[2][usedefault, addprefix=\global, 1=\mu]{\llangle#2\rrangle_{#1}}
\global\long\def\define{:=}
\global\long\def\tdomain{\mathcal{T}}
\newcommandx\opnorm[3][usedefault, addprefix=\global, 1=\mu, 2=]{\vertiii{#3}_{#1}^{#2}}
\newcommandx\fronorm[2][usedefault, addprefix=\global, 1=]{|#2|_{F}^{#1}}
\global\long\def\mmax{m_{\max{}}}
\global\long\def\hxi{\hat{\xi}}
\global\long\def\txi{\tilde{\xi}}
\global\long\def\bLi{\mathbf{L}_i}
\global\long\def\hX{\hat{X}}
\global\long\def\hL{\hat{L}}
\global\long\def\hmu{\hat{\mu}}
\global\long\def\bLi{\mathbf{L}_i}
\global\long\def\hbLi{\hat{\mathbf{L}}_i}
\global\long\def\btLi{\tilde{\mathbf{L}}_i}
\global\long\def\hphi{\hat{\phi}}
\global\long\def\bphi{\boldsymbol{\phi}}
\global\long\def\hbphi{\hat{\bphi}}
\global\long\def\hlambda{\hat{\lambda}}
\global\long\def\hsigma{\hat{\sigma}}
\global\long\def\bSigLi{\boldsymbol{\Sigma}_{\bLi}}
\global\long\def\hbSigLi{\hat{\boldsymbol{\Sigma}}_{\bLi}}
\global\long\def\blup{\mathcal{B}}
\global\long\def\Vec{\mathrm{Vec}}
\global\long\def\Tr{\mathrm{Tr}}
\DeclareMathOperator{\expm}{expm}
\begin{document}


\thispagestyle{empty} 

\vspace{-0.1in}

\single \bc {\bf \sc \Large Modeling Longitudinal Data on Riemannian Manifolds}
\vspace{0.2in}\\
Xiongtao Dai$^1$, Zhenhua Lin$^2$ and Hans-Georg M\"uller$^2$ \\

\vspace{.04in}

Department of Statistics, Iowa State University, Ames, IA 50011 USA\\
Department of Statistics, University of California, Davis\\
Davis, CA 95616 USA 
\ec \centerline{October 2018}

\vspace{0.2in} \thispagestyle{empty}
\bc{\bf \sf ABSTRACT} \ec \vspace{-.1in} \no 
\setstretch{1.1}
When considering  functional principal component analysis for sparsely observed
longitudinal  data that take values on a nonlinear manifold, a  major challenge  is how to handle the sparse and irregular observations that
are commonly encountered in longitudinal studies. 
Addressing this challenge,  
we provide theory and implementations for 
a manifold version of the principal analysis by conditional expectation (PACE)  procedure that produces representations intrinsic to the manifold, extending a well-established version 
of functional principal component analysis targeting sparsely sampled longitudinal data in linear spaces. 
Key steps are local linear smoothing methods for the
estimation of a Fr\'{e}chet mean curve, mapping the  observed manifold-valued longitudinal data  to tangent
spaces around the estimated mean curve, and applying smoothing methods 
to obtain the covariance structure of the mapped data.
Dimension reduction is achieved via  
representations based on the first few leading principal components.
A finitely
truncated representation of the original manifold-valued data is then obtained by mapping these
tangent space representations to the manifold. 
We show that the proposed estimates of mean curve
and covariance structure achieve state-of-the-art convergence
rates. 
For longitudinal emotional well-being data for unemployed workers as an example of time-dynamic compositional data that are located on a sphere,  we demonstrate that our methods lead to interpretable eigenfunctions and principal component scores, which are defined on tangent spaces.  In a second example, we analyze the body shapes of wallabies by mapping the relative size of their body parts onto a spherical pre-shape space. Compared to standard functional principal component analysis, which is based on Euclidean geometry,   the proposed approach leads to improved  trajectory recovery for sparsely sampled data on nonlinear manifolds.

\vspace{.15in}
\no {KEYWORDS:\quad Longitudinal Compositional Data, Data on Spheres, Dimension Reduction, Functional Data Analysis, Principal Component Analysis, Sparse and Irregular Data.}
\vspace{.1in}
\thispagestyle{empty} \vfill
\no {\small Research supported by NSF Grant DMS-1712864} 

\newpage
\pagenumbering{arabic} \setcounter{page}{1}

\double

\section{Introduction}\label{sec:Introduction}

Functional data are usually considered as elements of a Hilbert space \cp{horv:12,Hsing2015,mull:16:3}, a linear space with Euclidean geometry, where typical tools include 
functional principal component analysis \cp{Kleffe1973,Hall2006,chen:15:1}
and functional regression \cp{Hall2007c,kong:16,knei:16}.
Considerably  less work has been done on the analysis of  nonlinear functional data, which  are increasingly encountered 
in practice, such as $SO(3)$-valued functional data \citep{Telschow2016},
recordings of densely sampled  trajectories on the sphere,  including flight trajectories \cp{anir:17, dai:17:1}, or
functions residing on unknown manifolds  \citep{Chen2012}. 

Since functional data are intrinsically infinite-dimensional, dimension reduction is a necessity, and a convenient and popular tool for this is  functional principal component
analysis, which is geared towards linear functional data and is not suitable for functional data on  nonlinear manifolds, 
for which  \citet{dai:17:1} investigated an intrinsic Riemannian  Functional Principal Component
Analysis (Riemannian FPCA) for functions taking values on a nonlinear Euclidean
submanifold, with  a Fr\'{e}chet type mean curve.  The concept of Fr\'{e}chet mean as a minimizer of the Fr\'echet function extends 
the classical mean in Euclidean spaces to data on Riemannian manifolds \cp{patra:18}.   Using Riemannian logarithm maps, data on manifolds can be  mapped
into tangent spaces identified with hyperplanes in the ambient space
of the manifold. Then  Riemannian FPCA   can be  conducted on the mapped data, where the 
Fr\'{e}chet mean and Riemannian logarithm maps reflect 
the curvature of the underlying manifold, yielding representations that are intrinsic to the manifold, which is an advantage over extrinsic approaches. 
 
A challenge is that  functional data are often sparsely observed, i.e.  each function is only recorded at an irregular grid
consisting of a few points. Such sparse recordings are routinely encountered in  longitudinal studies \cp{verb:14}.  
For example, in a longitudinal survey of unemployed workers in New Jersey \citep{krue:11} that we analyze in Section 4,  
the number of longitudinal responses available per subject is less than 4  in more than a half of the subjects. For sparsely observed longitudinal/functional
data such as these,  observations  need to be pooled across subjects in order
to obtain sensible estimates of mean  and covariance functions, as the data available for individual subjects are so sparse that meaningful smoothing is not possible.
 This pooling idea is at the core of 
the principal analysis by conditional expectation (PACE) approach \citep{Yao2005a}, whereas for densely sampled 
functional data one can apply individual curve smoothing or  cross-sectional 
strategies \citep{Zhang2007}.

An special case of longitudinal data are  longitudinal compositional data, 
where at each time point one observes fractions or percentages for each of a fixed number of categories, which add up to one. Such data occur in many applications, eg., repeated voting, with  counts transformed into percentages of votes for items, consumer preferences in terms of what fraction prefers a certain item, microbiome \cp{li:15},
online prediction markets, soil or air composition over time, mood assessment, and shape analysis, where we will study data of the latter two types of longitudinal data in Section~\ref{sec:Application}. 
While a classical approach for compositional data is to apply the simplex geometry in the form of the Aitchison geometry \cp{aitc:86} or a 
variant \cp{egoz:03,tals:18}, 
a disadvantage is that a baseline category needs to be identified, which  cannot have null outcomes, due to the need to form quotients; this is 
is especially difficult to satisfy in longitudinal studies, where null outcomes may fluctuate between categories.

 Motivated by the need to analyze sparsely sampled longitudinal data as for example found in the emotional well-being data collected in a longitudinal survey for unemployed workers in New Jersey and containing  a substantial proportion of null outcomes,
 we develop a Riemannian principal analysis method geared towards sparsely and irregularly observed  Riemannian
functional data. 

The main contributions of this paper are three-fold:

(1) We  develop a principal component analysis for longitudinal compositional data, which we also illustrate with  sparsely sampled body shape growth curves of Tammar wallabies, 
extending the scope of the approach of \cite{dai:17:1}, which only applies to  densely observed data.  To our knowledge, no methods exist yet
for the analysis of longitudinal data on manifolds. 

(2) We   extend  Fr\'echet regression \cite{mull:18:3} to functional data, while the approach in \cite{mull:18:3} was restricted to nonfunctional data as dependence between repeated measurements is not taken into account. 

(3) Concerning  theoretical analysis, we extend the techniques developed in \citep{Li2010,Zhang2016} to manifold-valued data and obtain 
rates of uniform convergence for the mean function.   The lack of a vector space structure makes this technically challenging. 

To obtain intrinsic representations of the unobserved trajectories on a nonlinear Riemannian manifold from sparsely observed longitudinal data, we first pool data from all subjects to obtain estimates for the mean and covariance function, and then obtain estimates of the individual principal components and trajectories by Best Linear Unbiased Prediction (BLUP). We employ a manifold local linear smoothing approach to estimate the Fr\'{e}chet mean curve, extending the approach of \citet{mull:18:3} for sparsely observed Riemannian functional data. Local linear smoothing was originally studied in the context of 
Euclidean non-functional data  \cp{fan:96} and
later has been extended to curved non-functional data \cp{Yuan2012}. 
Observations
of each function are then mapped into the tangent spaces around the
estimated mean curve via Riemannian logarithm maps. As the log-mapped
observations are vectors in the ambient space of the manifold, we
proceed by adopting a scatterplot smoothing approach to estimate
the covariance structure of the log-mapped data and then obtain
a finitely truncated representation, where the principal component
scores are estimated by  PACE, or sometimes integration, depending on
the sparseness of the observations available per function. Finally, a finite-dimensional
representation for the original data is obtained  by applying Riemannian exponential
maps that pull the log-mapped data back to the manifold. 

\bco

Details of the RPACE approach will be described in  Section \ref{sec:Methodology}, with theoretical properties presented in Section
\ref{sec:Theory}, where we show that the estimates of mean curve
and covariance function have the same  uniform convergence
rates as those obtained in  \citet{Zhang2016} for Euclidean functional data.
We applied RPACE to analyze the two motivating datasets,  longitudinal emotion composition for unemployed workers in New Jersey, and  body shapes for Tammar wallabies, both of which are sparse, with details provided   in Section~\ref{sec:Application}. We further studied the numerical properties and finite sample  behavior  of RPACE on $S^2$ and SO$(3)$ in Section \ref{sec:Simulation} via Monte Carlo simulations, where  RPACE was shown to exhibit  preferable trajectory recovery performance as compared with extrinsic methods under differing sample sizes and sparsity scenarios. 

\fi

\section{Methodology}\label{sec:Methodology}

\subsection{Statistical Model}\label{subsec:sm}

Let $\manifold$ be a $d$-dimensional, connected and geodesically
complete Riemannian submanifold of $\real^{D}$, where $d$ and $D$
are positive integers such that $d\leq D$. The dimension $d$ is
 the intrinsic dimension of the manifold $\manifold$, while
$D$ is the ambient dimension. The Riemannian metric $\innerprod{\cdot}{\cdot}$
on $\manifold$, which defines a scalar product $\innerprod{\cdot}{\cdot}_{p}$
for the tangent space $\tangentspace p$ at each point $p\in\manifold$,
is induced by the canonical inner product of $\real^{D}$, and it 
also induces a geodesic distance function $d_{\manifold}$ on $\manifold$.
A brief introduction to  Riemannian manifolds can be found in the appendix of \citet{dai:17:1}, see also 
 \cite{Lang1995} and \cite{Lee1997}.

We define a  
$\manifold$-valued
\textit{Riemannian random process}, or simply Riemannian random process
  $X(t)$, as a $D$-dimensional vector-valued random process defined
on a compact domain $\tdomain\subset\real$ such that $X(t)\in\manifold$, where we  assume that the process $X$ is of second-order, in the sense that
for every $t\in\tdomain,$ there exists $p\in\manifold$, potentially
depending on $t$,  such that the Fr\'{e}chet variance $M(p,t)\define\expect d_{\manifold}^{2}(p,X(t))$ 
is finite. For a fixed $t$, if $p$ is a point on $\manifold$ satisfying
$M(p,t)=\inf_{q\in\manifold}M(q,t)$, then $p$ is 
a Fr\'{e}chet mean  of $X$ at $t$.
Under conditions described in  \citet{Bhattacharya2003},  
the Fr\'{e}chet mean of a random variable on a manifold exists and
is unique, which we shall assume for $X(t)$ at all $t\in\tdomain$.

\begin{enumerate}[leftmargin=*,labelsep=7mm,label=(X\arabic*)]
\setcounter{enumi}{-1}
\item\label{cond:X0} 
$X$ is of second-order, and the Fr\'{e}chet mean curve $\mu(t)$ exists and is unique.
\end{enumerate}
Formally, we define the unique Fr\'{e}chet mean
function $\mu$ by 
\begin{equation}
\mu(t)=\underset{p\in\manifold}{\arg\min}\,M(p,t),\qquad t\in\tdomain.\label{eq:frechet-variances}
\end{equation}

As $\manifold$ is geodesically complete, by the Hopf--Rinow theorem, its
exponential map $\Exp_{p}$ at each $p$ is defined on the entire
$\tangentspace p$. To make $\Exp_{p}$ injective, define the domain $\mathscr{D}_{p}$ to be the interior of the collection of tangent vectors $v\in\tangentspace p$ such that if $\gamma(t)=\Exp_p(tv)$ is a geodesic emanating from $p$ with the direction $v$, then $\gamma([0,1])$ is a minimizing geodesic. 
Then on the domain $\mathscr{D}_{p}$  
the map $\Exp_{p}$ is injective, and its image is denoted by $\mathrm{Im}(\Exp_{p})$. 
The Riemannian logarithm map at $p$, denoted by $\Log_{p}$,
is the inverse of $\Exp_{p}$ restricted to $\mathrm{Im}(\Exp_{p})$. Specifically,
if $q=\Exp_{p}v$ for some $v\in\mathscr{D}_{p}$, then $\Log_{p}q=v$.
To study the covariance structure of the random process $X$
on tangent spaces, we will assume
\begin{enumerate}[leftmargin=*,labelsep=7mm,label=(X\arabic*)]
\item\label{cond:X1} For some constant $\epsilon_0>0$,  $\quad \prob\{X(t)\in \manifold \backslash (\manifold\backslash \mathrm{Im}(\Exp_{\mu(t)}))^{\epsilon_0} \,\, \text{for all}\,\,   t\in\tdomain\}=1$, where
$A^{\epsilon_0}$ denotes the set $\bigcup_{p\in A}\{q\in \manifold:d_{\manifold}(p,q)<\epsilon_0\}$.
\end{enumerate}
This condition requires $X(t)$ to stay away from the cut locus of $\mu(t)$ uniformly for all $t\in\tdomain$, which is necessary for the logarithm map $\Log_{\mu(t)}$ to be well defined, and is not needed  if $\Exp_{\mu(t)}$ is injective on $T_{\mu(t)}\manifold$ for
all $t$. In the special case of a  $d$-dimensional unit sphere $\mathbb{S}^{\dim}$,
if $X(t)$ is continuous and the distribution of $X(t)$ vanishes
at an open set with positive volume that contains $\manifold\backslash \mathrm{Im}(\Exp_{\mu(t)})$, (X1) holds.  
Under  \ref{cond:X0} and \ref{cond:X1}, $\Log_{\mu(t)}X(t)$
is almost surely defined for all $t\in\tdomain$. 
We will write  $L(t)$ to denote the  $\real^{D}$-valued random process $\Log_{\mu(t)}X(t)$ and refer to $L$ as the
log process of $X$.  

An important observation 
\cp{Bhattacharya2003} is that $\expect L(\cdot)\equiv0$. Furthermore,
the second-order condition on $X$ passes on to $L$, i.e., $\expect\|L(t)\|_{2}^{2}=\expect d^2_\manifold(\mu(t),X(t))<\infty$
for every $t\in\tdomain$, where $\|\cdot\|_{2}$ denotes the canonical
Euclidean norm in $\real^{D}$. This enables us to define the covariance
function of $L$ by 
\begin{equation}
\Gamma(s,t)=\expect\{L(s)L(t)^{\transpose}\},\qquad s,t\in\tdomain.\label{eq:cov-func-L}
\end{equation}
This covariance function admits the  eigendecomposition
\[
\Gamma(s,t)=\sum_{k=1}^{\infty}\lambda_{k}\phi_{k}(s)\phi_{k}^{\transpose}(t),
\]
where $\phi_{k}$ are orthonormal, $\lambda_{k}\geq\lambda_{k+1}$,
and $\sum_{k=1}^{\infty}\lambda_{k}<\infty$. The logarithm
process $L$ has the Karhunen-Lo\`eve expansion
\[
L(t)=\sum_{k=1}^{\infty}\xi_{k}\phi_{k}(t),
\]
where 
\begin{equation} \label{eq:xiDef}
\xi_{k} = \int_{\mathcal{T}} L(t)^{\transpose} \phi_k(t) dt
\end{equation}
are uncorrelated random variables such that $\expect\xi_{k}=0$
and $\expect\xi_{k}^{2}=\lambda_{k}$. 

A finite-truncated representation
of $X$ intrinsic to the manifold is then given by 
\begin{equation} \label{eq:XKLK}
X_{K}(t)\define\Exp_{\mu(t)}L_{K}(t), \quad L_{K}(t)=\sum_{k=1}^{K}\xi_{k}\phi_{k}(t)
\end{equation}
for some integer $K\geq0$. It was demonstrated in \citet{dai:17:1}  that this representation
is superior in terms of trajectory approximation for densely/completely observed manifold valued functional data compared to functional principal component analysis (FPCA), which is not adapted to the intrinsic manifold curvature, and for the same reason the 
scores $\xi_{k}$ are better predictors for classification tasks when compared
to traditional FPCs. 

Suppose $X_{1},\ldots,X_{n}$ are i.i.d. realizations of a $\manifold$-valued Riemannian
random process $X$. To reflect the situation in longitudinal studies, we  assume that each $X_{i}$ is only recorded
at $m_{i}$ random time points $T_{i,1},\ldots,T_{i,m_{i}}\in\tdomain$,
and each observation $X_{i}(T_{ij})$ is furthermore corrupted by some intrinsic
random noise. More specifically, we observe  $\mathbb{D}_{n}=\{(T_{ij},Y_{ij}):\:i=1,2,\ldots,n,\:j=1,2,\ldots,m_{i}\}$
such that $T_{i,j}\overset{i.i.d.}{\sim}f$ for some density $f$
supported on $\tdomain$, and the $T_{i,j}$ are independent of the $X_i$. Furthermore, conditional on $X_{i}$ and $T_{i,1},\ldots,T_{i,m_{i}}$,
the noisy observations $Y_{ij}=\Exp_{\mu(T_{ij})}\left\{ L_i(T_{ij})+\varepsilon_i(T_{ij})\right\} $
are independent, where $\varepsilon_i(T_{ij})\in\tangentspace{\mu(T_{ij})}$
is independent of $X_{i}$, with isotropic variance $\sigma^2$ and $\expect\{\varepsilon_i(T_{ij})\mid T_{ij}\}\equiv0$.
As  $\expect\{L_i(T_{ij})\mid T_{ij}\}\equiv0$, the assumption
on $\varepsilon$ implies that $\expect\{\Log_{\mu(T_{ij})}Y_{ij}\mid T_{ij}\}\equiv0$.

\subsection{Estimation } \label{ss:estimation}

For the case of sparse functional or longitudinal data that are the focus of this paper,  it is not possible to estimate the mean curve using the cross-sectional approach of \cite{dai:17:1}, as repeated observations at the same time $t$ are not available. Instead we develop a new method, for which we harness   
 Fr\'echet regression \citep{mull:18:3}. Fr\'echet regression was developed for independent measurements, and for our purposes we need to 
 study an extension that is valid for     
the case of repeated measurements.  

For any  $t\in\tdomain$ and $K_{h_{\mu}}$,
where the kernel $K(\cdot)$ is a symmetric density function and  $h_{\mu}>0$ is a sequence of bandwidths, with $K_{h_{\mu}}(x) =h_{\mu}^{-1}K(x/h_{\mu})$, we define the local weight
function at $t$ by
\[
\hat{\omega}_{ij}(t,h_{\mu})=\frac{1}{\hat{\sigma}_{0}^{2}}K_{h_{\mu}}(T_{ij}-t)\{\hat{u}_{2}-\hat{u}_{1}(T_{ij}-t)\},
\]
where $\hat{u}_{k}(t)=\sum_{i=1}^{n}w_{i}\sum_{j=1}^{m_{i}}K_{h_{\mu}}(T_{ij}-t)(T_{ij}-t)^{k}$
for $k=0,\,1,\,2$, and $\hat{\sigma}_{0}^{2}(t)=\hat{u}_{0}(t)\hat{u}_{2}(t)-\hat{u}_{1}^{2}(t)$.
Defining the double-weighted Fr\'echet function 
\[
Q_{n}(y,t)=\sum_{i=1}^{n}w_{i}\sum_{j=1}^{m_{i}}\hat{\omega}(T_{ij},t,h)d_{\manifold}^{2}(Y_{ij},y),
\]
which includes weights  $w_{i}$  for individual subjects satisfying $\sum_{i=1}^{n}m_{i}w_{i}=1$, we 
estimate the mean trajectory $\mu(t)$ by 
\[
\hat{\mu}(t)=\underset{y\in\manifold}{\arg\min}\,\,Q_{n}(y,t).
\]
Note that for the Euclidean special case, where  $\manifold=\real^{D}$,  $Q_{n}$ coincides
with the loss function used in \citet{Zhang2016} for linear functional
data. 

For the choice of the weights $w_i$, two  options have been studied in the Euclidean special case. One is to assign equal weight to each observation, i.e., $w_i=1/(n\bar{m})$ with $\bar{m}=n^{-1}\sum_{i=1}^n m_i$, used in \cite{Yao2005a}. The other is to assign equal weight to each subject, i.e., $w_i=1/(nm_i)$, as proposed in \ci{Li2010}. We refer to the former scheme as ``OBS'' and to the latter as ``SUBJ'', following  \cite{Zhang2016},  who found that the OBS scheme is generally preferrable for non-dense functional data; the SUBJ scheme performs better for ultra-dense data; and an intermediate weighting  scheme that is in between  OBS and SUBJ performs at least as well as the  OBS and SUBJ schemes in the Euclidean case. The latter corresponds to  the choice  $w_i=\alpha/(n\bar{m})+(1-\alpha)/(nm_i)$ for a constant $\alpha=c_2/(c_1+c_2)$ with $c_1=1/(\bar{m}h_\mu)+\bar{m}_2/\bar{m}^2$ and $c_2=1/(\bar{m}_Hh_\mu)+1$, where $\bar{m}_2=n^{-1}\sum_{i=1}^n m_i^2$ and $\bar{m}_H=n/\sum_{i=1}^nm_i^{-1}$, and we refer to this choice  as INTM. 

To estimate the covariance structure, we first map the original data
into tangent spaces, setting $\hat{L}_{ij}=\Log_{\hat{\mu}(T_{ij})}Y_{ij}$
and treating $\hat{L}_{ij}$ as a column vector in $\real^{D}$. To smooth  $D\times D$ matrices $\Gamma_{ijl}=\hat{L}_{ij}\hat{L}_{il}^{\transpose}$
for $j\neq l$, we extend the
scatterplot smoother \citep{Yao2005a} to matrix-valued data by finding minimizing $D\times D$ matrices $\hat{A}_0$, $\hat{A}_{1}$
and $\hat{A}_{2}$ according to
\be \la{wlse} 
&&(\hat{A}_0,\hat{A}_{1},\hat{A}_{2})\\
&&\nonumber \define \underset{A_{0},A_{1},A_{2}}{\arg\min}\sum_{i=1}^{n}v_{i}\sum_{1\leq j\neq l\leq m_{i}}\|\Gamma_{ijl}-A_0-(T_{ij}-s)A_{1}-(T_{il}-t)A_{2}\|_{F}^{2}K_{h_{\Gamma}}(T_{ij}-s)K_{h_{\Gamma}}(T_{il}-t),
\ee
where in the above weighted least squares error minimization step $\|\cdot\|_{F}$ is the matrix Frobenius norm, $h_{\Gamma}>0$
is a bandwidth, and $v_{i}$ are weights with $\sum_{i=1}^{n}m_{i}(m_{i}-1)v_{i}=1$. 

For the OBS weight scheme, $v_i=1/\sum_{i=1}^nm_i(m_i-1)$,  for the SUBJ scheme, $v_i=1/[nm_i(m_i-1)]$,  
while for INTM,   $v_i=\alpha/\sum_{i=1}^nm_i(m_i-1)+(1-\alpha)/[nm_i(m_i-1)]$ for a constant $\alpha=c_2/(c_1+c_2)$ with $c_1=1/(\bar{m}_2h_\Gamma^2)+\bar{m}_3/(\bar{m}_2^2h_\Gamma)+\bar{m}_4/\bar{m}_2^2$ and $c_2=1/(\bar{m}_{Q}h_\Gamma^2)+1/(\bar{m}_Hh_\Gamma)+1$, where $\bar{m}_k=n^{-1}\sum_{i=1}^n m_i^k$ and $\bar{m}_{Q}=n/\sum_{i=1}^nm_i^{-2}$, in analogy to \citet{Zhang2016}. We then
use $\hat{A}_0$  as obtained in  (\ref{wlse}) as an estimate  of the
population covariance function $\Gamma(s,t)$. For $s=t$, the minimization
is over symmetric matrices  $A_{1},A_{2}$ and symmetric semi-positive
definite matrices $A$. Estimates for the eigenfunctions $\phi_{k}$
and $\lambda_{k}$ of $\Gamma$ are then obtained by the corresponding eigenfunctions
$\hat{\phi}_{k}$ and eigenvalues $\hat{\lambda}_{k}$ of $\hat{\Gamma}$.


In applications, one needs  to choose appropriate bandwidths $h_{\mu}$ and
$h_{\Gamma}$, as well as the number of included components $K$. 
To select $h_{\mu}$ for smoothing the mean function $\mu$, we adopt a generalized cross-validation (GCV) criterion 
\[
\mathrm{GCV}(h)=\frac{\sum_{i=1}^{n}\sum_{j=1}^{m_{i}}d_{\manifold}^{2}(\hat{\mu}(T_{ij}),Y_{ij})}{(1 - K_{h}(0)/N)^2},
\]
where $N = \sum_{i=1}^n m_i$ is the total number of observations, and then choose $h_\mu$ as the minimizer of GCV$(h)$. 
While a similar GCV strategy can be adopted to select the bandwidth for the covariance function $\Gamma$, we propose to employ the simpler choice $h_\Gamma = 2h_\mu$, which we found to perform well numerically and which is computationally efficient.

To determine the number of components $K$ included in the finite-truncated representation \eqref{eq:XKLK}, it is sensible to consider the fraction of variation explained (FVE)
\begin{equation} \label{eq:FVE}
\text{FVE}(K) = \frac{\sum_{k=1}^K \lambda_k}{\sum_{j=1}^\infty \lambda_j}, \quad \widehat{\text{FVE}}(K) = \frac{\sum_{k=1}^K \hlambda_k}{\sum_{j=1}^\infty \hlambda_j},
\end{equation}
choosing the number of included components as the smallest $K$ such that the FVE exceeds a specified  threshold $0 < \gamma < 1$,
\begin{equation} \label{eq:Kstar}
K^* = \min\{ K: \text{FVE}(K) \ge \gamma \}, \quad \hat{K}^* = \min\{ K: \widehat{\text{FVE}}(K) \ge \gamma \},
\end{equation} 
where  common choices of $\gamma$ are 0.90, 0.95, and 0.99. 

\subsection{\label{ss:RPACE} Riemannian Functional Principal Component Analysis \newline Through Conditional Expectation}
The unobserved scores $\xi_{ik}$ need to be estimated from the discrete samples $\{(T_{ij}, X_{ij})\}_{j=1}^{m_i}$ or log-mapped samples $\{(T_{ij}, L_{ij})\}_{j=1}^{m_i}$. Approximating \eqref{eq:xiDef} by numerical integration is not feasible when  
the number of repeated measurements per curve is small, in analogy to the Euclidean case \citep{Yao2005a,krau:15}. 
We therefore propose  Riemannian Functional Principal Component Analysis Through Conditional Expectation (RPACE), generalizing the PACE procedure of \cite{Yao2005a} for tangent-vector valued processes, where we apply best linear unbiased predictors  (BLUP) to estimate the 
$\xi_{ik}$, obtaining the RFPC scores 
\begin{equation} \label{eq:xiTilde}
\txi_{ik} = \blup[\xi_{ik} \mid \bLi] = \lambda_k \bphi_{ik}^\transpose \bSigLi^{-1} \bLi.
\end{equation}

Here  $\blup$ denotes the best linear unbiased predictor. Writing $\Vec(\cdot)$ for the vectorization operation, $\bLi = \Vec([L_{i1}, \dots, L_{im_i}])$ are  the vectorized concatenated log-mapped observations for subject $i$, $\btLi = \Vec([L_i(T_{i1}), \dots, L_i(T_{im_i})])$, $\bphi_{ik} = \Vec([\phi_k(T_{i1}), \dots, \phi_k(T_{im_i})])$, and $\bSigLi = \expect(\bLi \bLi^T) = \expect(\btLi \btLi^T) + \sigma^2 I$, where  $I$ is the identity matrix. The entry of $\expect(\btLi \btLi^T)$ corresponding to $\expect([L_i(T_{ij})]_l [L_i(T_{il})]_m)$ is $[\Gamma(T_{ij}, T_{ik})]_{lm}$, where $[v]_{a}$ and $[A]_{ab}$ denote the $a$th or $(a,b)$th entry in a vector $v$ or  matrix $A$, respectively.  Substituting  corresponding estimates for the  unknown quantities in \eqref{eq:xiTilde},  we obtain plug-in estimates for $\xi_{ik}$, 
\begin{equation} \label{eq:xiHat}
\hxi_{ik} = \hlambda_k \hbphi_{ik}^\transpose \hbSigLi^{-1} \hbLi,
\end{equation}
where $\hbSigLi = \hat{\expect}(\btLi \btLi^T) + \hsigma^2I$; $\hat{\expect}(\btLi \btLi^T)$, $\hlambda_k$, and $\hbphi_{ik}$ are obtained from $\hat{\Gamma}$, the minimizer of (\ref{wlse}),   and we define $\hsigma^2 = \sum_{i=1}^n \sum_{j=1}^{m_i} (ndm_i)^{-1} \Tr(L_{ij}L_{ij}^T - \hat{\Gamma}(T_{ij}, T_{ij}))$, where $\Tr(A)$ denotes the trace of a matrix $A$. The $K$-truncated processes
\begin{equation} \label{eq:LKXK}
L_{iK}(t) = \sum_{k=1}^K \xi_{ik}\phi_k(t), \quad X_{iK}(t)= \sum_{k=1}^K \Exp_{\mu(t)}(L_{iK}(t))
\end{equation}
are estimated by 
\begin{equation} \label{eq:hLKXK}
\hL_{iK}(t) = \sum_{k=1}^K \hxi_{ik}\hphi_k(t), \quad \hX_{iK}(t) = \sum_{k=1}^K \Exp_{\hmu(t)}(\hL_{iK}(t)).
\end{equation}

The BLUP estimate $\txi_{ik}$ coincides with the conditional expectation $E[\xi_{ik}\mid \bLi]$, or the best prediction of $\xi_{ik}$ given observation $\bLi$, if the joint distribution of $(\xi_{ik}, \bLi)$ is elliptically contoured \citep[][Theorem 2.18]{fang:90}, with the Gaussian distribution as the most prominent example.  

\section{Asymptotic Properties}\label{sec:Theory}

To derive  the asymptotic properties of the estimates in Section \ref{sec:Methodology},
in addition to conditions \ref{cond:X0} and \ref{cond:X1}, we require 
the following assumptions.
\begin{enumerate}[leftmargin=*,labelsep=7mm,label= (M\arabic*)]
\setcounter{enumi}{-1}%
\item\label{cond:M0} The domain $\tdomain$ is compact and  the
manifold $\manifold$ is a bounded submanifold of $\real^{D}$.
\end{enumerate}

\begin{enumerate}[leftmargin=*,labelsep=7mm,label= (K\arabic*)]
\setcounter{enumi}{-1}%
\item\label{cond:K0} The kernel function $K$ is a Lipschitz continuous symmetric probability density
function on $[-1,1]$.
\end{enumerate}
\begin{enumerate}
[leftmargin=*,labelsep=7mm,label= (X\arabic*)]
\setcounter{enumi}{1}%
\item\label{cond:X2}Almost surely, the sample paths $X(\cdot)$ are
 twice continuously differentiable.
\end{enumerate}
Note that the boundedness assumption on the manifold can be relaxed
by imposing  additional conditions on the random process $X$, or by requiring  a compact support for $X(t)$, $t\in \tdomain$. The assumptions on the manifold are satisfied for our data applications in Section~\ref{sec:Application} where the manifolds under consideration are spheres.

To state the next assumption, we define the following quantities. Let
$\omega(s,t,h)=\frac{1}{\sigma_{0}^{2}}K_{h}(s-t)\{u_{2}(t)-u_{1}(t)(s-t)\}$,
where $u_{k}(t)=\expect\{K_{h}(T-t)(T-t)^{k}\}$, $k=0, 1, 2$, and $\sigma_{0}^{2}(t)=u_{0}(t)u_{2}(t)-u_{1}^{2}(t)>0$ for all $t$ by the Cauchy--Schwarz inequality. Note that the finiteness of $u_k$ is implied by the Lipschitz continuity of the kernel function $K$ and the compactness of the domain $\tdomain$. 
Define 
$\tilde{Q}_{h}(p,t)=\expect\{\omega(T,t,h)d_{\manifold}^{2}(Y,p)\}$
and $\tilde{\mu}(t)=\underset{y\in\manifold}{\arg\min}\,\tilde{Q}_{h}(y,t)$. 

\begin{enumerate}
[leftmargin=*,labelsep=7mm,label= (L\arabic*)]
\setcounter{enumi}{-1}%
\item\label{cond:L0} The Fr\'echet mean functions $\mu$, $\tilde{\mu}$, and $\hat{\mu}$
exist and are unique, the latter  almost surely for all $n$. 
\item\label{cond:L1} The density $f(t)$ of the random times $T$ when measurements are made is positive and twice continuously differentiable for $t\in\tdomain$.
\end{enumerate}

Recall that $\tangentspace p$ denotes the tangent space at $p\in\manifold$ and $\Exp_{p}$ is the Riemannian exponential map at $p$, which maps a tangent vector $v\in \tangentspace p$ onto the manifold $\manifold$. For $p\in\manifold$,  define a real-valued function $G_{p}(v,t)=M(\Exp_{p}v,t)$, $v \in \tangentspace p$ and $t \in \tdomain$, where $M(p,t)=\expect d_{\manifold}^{2}(p,X(t))$ is the Frech\'et variance function defined in Section \ref{subsec:sm}. We assume 
\begin{enumerate}
[leftmargin=*,labelsep=7mm,label= (L\arabic*)]
\setcounter{enumi}{1}%
\item\label{cond:L2} The Hessian of $G_p(\cdot,t)$ at $v=0$ is uniformly positive definite along the mean function, i.e.,
\[
\inf_{t\in\tdomain}\lambda_{\min}\left(\frac{\partial^{2}}{\partial v^{2}}G_{\mu(t)}(v,t)\mid_{v=0}\right)>0. 
\]
\end{enumerate}

Conditions \ref{cond:L0} is necessary to ensure a consistent estimate of the mean curve using $M$-estimation theory, while \ref{cond:L1} is a design density condition; both are standard in the literature  \citep{Zhang2016,mull:18:3}. 
On a Riemannian manifold $\manifold$ with sectional curvature at most $\mathcal{K}$, \ref{cond:L0} and \ref{cond:L2} are satisfied if the support of $X(t)$ is within $B(\mu(t), \pi/(2\mathcal{K}))$, where $B(p,r)$ is a geodesic ball with center $p\subset \manifold$ and radius $r$ \citep{bhat:12}; this specifically  holds for longitudinal compositional data mapped to the positive orthant of a unit sphere. The next two conditions
impose certain convergence rates for  $h_{\mu}$ and $h_{\Gamma}$, respectively. For simplicity, we shall assume $m_i\equiv m$, noting that  results paralleling  those in \cite{Zhang2016} can be obtained for the general case. 
\begin{enumerate}
[leftmargin=*,labelsep=7mm,label= (H\arabic*)]
\item\label{cond:H1} $h_{\mu}\rightarrow0$ and $(\log n)/(nmh_\mu)\rightarrow0$.
\item\label{cond:H2} $h_{\Gamma}\rightarrow0$, $(\log n)/(nm^2h_\Gamma^2)\rightarrow0$,
 and $(\log n)/(nmh_\Gamma)\rightarrow0$.
\end{enumerate}
The following result establishes the uniform convergence rate for
estimates  $\hat{\mu}$. 
\begin{thm}
\label{thm:mu-asymptotic}Assume conditions \ref{cond:X0}--\ref{cond:X2}, \ref{cond:M0},
\ref{cond:K0}, \ref{cond:L0}--\ref{cond:L2}  and
\ref{cond:H1} hold. Then
\begin{equation}
\sup_{t\in\tdomain}d_{\manifold}^{2}(\hat{\mu}(t),\mu(t))=\Op\left(h_{\mu}^{4}+\frac{\log n}{nmh_{\mu}}+\frac{\log n}{n}\right).\label{eq:mean-rate}
\end{equation}
\end{thm}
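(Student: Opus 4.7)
The plan is a classical M-estimation argument with an intermediate target. Introduce $\tilde{\mu}(t)=\arg\min_{y\in\manifold}\tilde{Q}_{h_{\mu}}(y,t)$ as defined in the lead-up to the statement, and use $d_\manifold(\hat{\mu}(t),\mu(t))\le d_\manifold(\hat{\mu}(t),\tilde{\mu}(t))+d_\manifold(\tilde{\mu}(t),\mu(t))$ to split the error into a deterministic bias and a stochastic fluctuation. Working on the chart $\tangentspace{\mu(t)}$ via $\Exp_{\mu(t)}$, condition \ref{cond:L2} together with the continuity of the Hessian in $t$ yields a uniform quadratic lower bound $d_\manifold^{2}(p,\mu(t))\lesssim M(p,t)-M(\mu(t),t)$ for $p$ in a neighborhood of $\mu(t)$, and an analogous bound with $M$ replaced by $\tilde{Q}_{h_{\mu}}$ and $\mu$ by $\tilde{\mu}$ for sufficiently small $h_{\mu}$. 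Combined with $Q_n(\hat{\mu}(t),t)\le Q_n(\tilde{\mu}(t),t)$, this reduces the two distances to controlling the corresponding excess-loss quantities for $Q_n$, $\tilde{Q}_{h_{\mu}}$, and $M$, which is the standard route in $M$-estimation.

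For the deterministic bias, use \ref{cond:X2} and \ref{cond:L1} to Taylor-expand the smoothed integrand in $s$ about $t$ and in $v$ about $0$ in the chart $\tangentspace{\mu(t)}$. The design-adaptive local linear weight $\omega(\cdot,t,h_{\mu})$ annihilates the linear term in $(s-t)$ and produces the standard bias expansion $\tilde{Q}_{h_{\mu}}(\Exp_{\mu(t)}v,t)-M(\Exp_{\mu(t)}v,t)=O(h_{\mu}^{2})$ uniformly in $(v,t)$. Combined with the quadratic lower bound on $M$, the minimizer $\tilde{v}(t)=\Log_{\mu(t)}\tilde{\mu}(t)$ satisfies $\sup_{t}\|\tilde{v}(t)\|_{2}=O(h_{\mu}^{2})$ and hence $\sup_{t}d_{\manifold}^{2}(\tilde{\mu}(t),\mu(t))=O(h_{\mu}^{4})$.

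For the stochastic term, write $Q_{n}(y,t)-\tilde{Q}_{h_{\mu}}(y,t)$ as the sum of (a) an error from replacing the data-driven weights $\hat{\omega}_{ij}(t,h_{\mu})$ by their population versions $\omega(T_{ij},t,h_{\mu})$, controlled by the uniform convergence of $\hat{u}_{k}(t)\to u_{k}(t)$ for $k=0,1,2$; and (b) a centered kernel-weighted empirical process $E_{n}(y,t)=\sum_{i}w_{i}S_{i}(y,t)-\tilde{Q}_{h_{\mu}}(y,t)$ with $S_{i}(y,t)=\sum_{j}\omega(T_{ij},t,h_{\mu})d_{\manifold}^{2}(Y_{ij},y)$. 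Decompose $E_{n}=\sum_{i}w_{i}[\expect\{S_{i}(y,t)\mid X_{i}\}-\tilde{Q}_{h_{\mu}}(y,t)]+\sum_{i}w_{i}[S_{i}(y,t)-\expect\{S_{i}(y,t)\mid X_{i}\}]$ into a between-subject piece with $n$ independent summands and a within-subject piece with effectively $nm$ kernel-weighted summands. Bernstein's inequality on each piece, using the boundedness of the summands from \ref{cond:M0}, yields pointwise rates $\Op(\log n/n)$ and $\Op(\log n/(nmh_{\mu}))$ respectively, where the $h_{\mu}^{-1}$ factor in the second rate comes from the kernel variance. Uniformity over $t\in\tdomain$ and over $y$ in a manifold neighborhood of $\mu(t)$ is then obtained by an $\epsilon$-net/chaining argument using the Lipschitz property of the kernel from \ref{cond:K0} and the compactness/boundedness from \ref{cond:M0}, with the $\log n$ factor absorbing the union bound. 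Combining with the bias yields \eqref{eq:mean-rate}.

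The main obstacle is step (b): uniform concentration of the two-level (subject and observation) kernel-weighted empirical process when the payoff $d_{\manifold}^{2}(Y_{ij},y)$ depends on a manifold-valued argument $y$, so that the Euclidean derivation of \cite{Zhang2016} does not directly transfer. The resolution is to exploit \ref{cond:L2} and \ref{cond:M0} to first prove a preliminary consistency $\sup_{t}d_{\manifold}(\hat{\mu}(t),\mu(t))\to 0$ via a coarser rate, which then restricts the minimization to a tubular neighborhood of $\mu(t)$ on which $\Log_{\mu(t)}$ is a smooth diffeomorphism. This localization reduces the empirical process to one on a bounded subset of $\real^{\dim}$, where standard chaining and Bernstein-type inequalities, together with the smoothness of $d_{\manifold}^{2}(\cdot,\cdot)$ near the diagonal, deliver the claimed rate; conditions \ref{cond:X1} and \ref{cond:L0} ensure the log maps are well defined throughout.
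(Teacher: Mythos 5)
Your plan coincides with the paper's own proof: the paper likewise splits $d_{\manifold}(\hat{\mu}(t),\mu(t))$ via the intermediate smoothed target $\tilde{\mu}$, obtains the bias $\sup_{t}d_{\manifold}^{2}(\tilde{\mu}(t),\mu(t))=O(h_{\mu}^{4})$ by combining the uniform $O(h_{\mu}^{2})$ expansion of $\tilde{Q}_{h_{\mu}}-M$ (Lemma~\ref{lem:Qn-consistency}) with the uniform quadratic lower bound derived from \ref{cond:L2} (Lemma~\ref{lem:L0-L2}), and treats the stochastic term exactly as you describe in Lemma~\ref{lem:variance-rate} --- splitting off the $\hat{u}_{k}-u_{k}$ weight error, centering the kernel-weighted empirical process, and invoking an entropy bound together with a preliminary uniform consistency step before a dyadic peeling argument. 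One small caution when you write the details: the pointwise Bernstein bound on the centered process is of order $\sqrt{\log n}\,(1/n+1/(nmh_{\mu}))^{1/2}$ rather than $\log n/n+\log n/(nmh_{\mu})$; the squared rate in \eqref{eq:mean-rate} emerges only after the $\delta$-scaled increment bound is balanced against the quadratic lower bound $C\,d_{\manifold}^{2}(y,\tilde{\mu}(t))$, which is precisely what the paper's shelling over the sets $S_{k,n}(t)$ accomplishes.
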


This result shows that the estimate $\hat{\mu}$ enjoys the same rate as the one obtained in  \citet{Zhang2016} for the Euclidean case, even in the presence
of curvature. The rate in \eqref{eq:mean-rate} has three terms that correspond to  three regimes that are characterized by the growth rate of
$m$ relative to the sample size: (1) When $m\ll(n/\log n)^{1/4}$,
the observations per curve are sparse, and the optimal choice $h_{\mu}\asympeq(nm/\log n)^{-1/5}$
yields $\sup_{t\in\tdomain}d_{\manifold}(\hat{\mu}(t),\mu(t))=\Op\left((nm/\log n)^{-2/5}\right)$;
(2) When $m\asympeq(n/\log n)^{1/4}$, corresponding to an  intermediate case, the optimal choice 
$h_{\mu}\asympeq(n/\log n)^{-1/4}$ leads to the uniform rate 
$\Op\left(\{(\log n)/n\}^{1/2}\right)$  for $\hat{\mu}$; (3) When $m\gg(n/\log n)^{1/4}$,
the observations are dense, and any choice $h_{\mu}=o\left((n/\log n)^{-1/4}\right)$
gives rise to the uniform rate  $\Op\left(\{(\log n)/n\}^{1/2}\right)$. The transition from (1) to (3) is akin to a phase transition, similar to the one observed in 
\ci{Hall2006a}. 

The next result concerns the uniform rate for the estimator $\hat{\Gamma}$ of $\Gamma$, the covariance function of the log-mapped data, extending a result of 
 \citet{Zhang2016} for the Euclidean case to curved functional data.
\begin{thm}
\label{thm:covariance-rate}Assume conditions \ref{cond:X0}--\ref{cond:X2},
\ref{cond:M0}, \ref{cond:K0}, \ref{cond:L0}--\ref{cond:L2}, 
\ref{cond:H1} and \ref{cond:H2} hold. Then
\begin{equation}
\sup_{s,t\in\tdomain}\|\hat{\Gamma}(s,t)-\Gamma(s,t)\|_{F}^{2}=\Op\left(h_{\mu}^{4}+h_{\Gamma}^{4}+\frac{\log n}{nmh_{\mu}}+\frac{\log n}{n}+\frac{\log n}{nm^{2}h_{\Gamma}^{2}}+\frac{\log n}{nmh_{\Gamma}}\right).
\label{eq:rate-cov}
\end{equation}
\end{thm}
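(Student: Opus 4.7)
The plan is to decompose the error into an oracle piece (where the true mean is used) and a mean-plug-in piece, bounding each separately. Define the oracle log-mapped observations $\tilde{L}_{ij}=\Log_{\mu(T_{ij})}Y_{ij}$ and the corresponding raw products $\tilde{\Gamma}_{ijl}=\tilde{L}_{ij}\tilde{L}_{il}^{\transpose}$, and let $\tilde{\Gamma}(s,t)$ be the matrix-valued 2D local linear smoother obtained by solving (\ref{wlse}) with $\tilde{\Gamma}_{ijl}$ in place of $\Gamma_{ijl}=\hat L_{ij}\hat L_{il}^{\transpose}$. Then I would write
\[
\hat{\Gamma}(s,t)-\Gamma(s,t)=\bigl[\hat{\Gamma}(s,t)-\tilde{\Gamma}(s,t)\bigr]+\bigl[\tilde{\Gamma}(s,t)-\Gamma(s,t)\bigr]
\]
and bound the two bracketed terms in turn, uniformly in $(s,t)\in\tdomain^{2}$.

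For the second bracket, I would adapt the uniform convergence result for 2D local linear covariance smoothing of \citet{Zhang2016} to matrix-valued responses. Since condition \ref{cond:M0} guarantees that $\|\tilde{L}_{ij}\|_{2}$ is bounded, and \ref{cond:X2} together with (X1) transfers smoothness of $X$ to smoothness of the ambient-space covariance $\Gamma$, the coordinate-wise application of their argument to each of the $D^{2}$ scalar entries of the matrix smoother yields
\[
\sup_{s,t\in\tdomain}\|\tilde{\Gamma}(s,t)-\Gamma(s,t)\|_{F}^{2}=\Op\!\left(h_{\Gamma}^{4}+\frac{\log n}{nm^{2}h_{\Gamma}^{2}}+\frac{\log n}{nmh_{\Gamma}}+\frac{\log n}{n}\right),
\]
controlling both bias (the $h_{\Gamma}^{4}$ term from a 2D Taylor expansion under smoothness of $\Gamma$) and the variance terms arising from within-subject and across-subject fluctuations, using a Bernstein-type bound combined with a covering of $\tdomain^{2}$ together with the Lipschitz continuity of $K$ from \ref{cond:K0}.

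For the first bracket, the key is to quantify the perturbation caused by replacing $\mu$ by $\hat{\mu}$. Conditions \ref{cond:X1} and \ref{cond:M0} confine $Y_{ij}$ to a compact subset on which the map $(p,q)\mapsto\Log_{p}q$ is jointly smooth; combined with \ref{cond:L0} and \ref{cond:L2}, a uniform Lipschitz bound $\|\hat{L}_{ij}-\tilde{L}_{ij}\|_{2}\lesssim d_{\manifold}(\hat{\mu}(T_{ij}),\mu(T_{ij}))$ holds, uniformly over $i,j$ on the event that $\hat{\mu}$ stays in a neighborhood of $\mu$, which has probability tending to one by Theorem~\ref{thm:mu-asymptotic}. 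Writing the bilinear identity
\[
\hat{L}_{ij}\hat{L}_{il}^{\transpose}-\tilde{L}_{ij}\tilde{L}_{il}^{\transpose}=(\hat{L}_{ij}-\tilde{L}_{ij})\hat{L}_{il}^{\transpose}+\tilde{L}_{ij}(\hat{L}_{il}-\tilde{L}_{il})^{\transpose},
\]
and using boundedness of $\|\hat{L}_{il}\|_{2}$ and $\|\tilde{L}_{ij}\|_{2}$ under \ref{cond:M0}, Theorem~\ref{thm:mu-asymptotic} yields $\sup_{i,j,l}\|\hat{\Gamma}_{ijl}-\tilde{\Gamma}_{ijl}\|_{F}=\Op\bigl(h_{\mu}^{2}+\sqrt{\log n/(nmh_{\mu})}+\sqrt{\log n/n}\bigr)$. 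Because (\ref{wlse}) is a linear smoother in its response whose weight denominator $\sigma_{0}^{2}$-analogue converges uniformly to a positive limit by \ref{cond:L1} and \ref{cond:H2}, this entry-wise perturbation propagates to $\sup_{s,t}\|\hat{\Gamma}(s,t)-\tilde{\Gamma}(s,t)\|_{F}$ at the same order, contributing the $h_{\mu}^{4}+\log n/(nmh_{\mu})+\log n/n$ portion of (\ref{eq:rate-cov}). Adding the two brackets gives the stated rate.

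The main obstacle will be step (1), namely propagating the uniform $\hat{\mu}-\mu$ error through the smoother in a way that preserves the sup-norm rate rather than only the pointwise rate; this requires a uniform-in-$(s,t)$ control of the stochastic denominator of the local linear weights, together with a careful handling of the Lipschitz constant of $\Log_{p}q$ in $p$ on a compact tube around the graph of $\mu$, where condition \ref{cond:L2} (positive definite Hessian) provides the non-degeneracy needed to keep the constant bounded. A secondary difficulty is verifying that the matrix-valued extension of the Zhang--Wang bound holds with the Frobenius norm uniformly, which I handle by treating each coordinate of the $D\times D$ matrix separately and invoking the (finite) bound $D^{2}$ in the union.
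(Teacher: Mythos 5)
Your proposal is correct and follows essentially the same route as the paper: the paper likewise splits $\hat{\Gamma}-\Gamma$ into an oracle smoother built from $\delta_{ijl}=(\Log_{\mu(T_{ij})}Y_{ij})(\Log_{\mu(T_{il})}Y_{il})^{\transpose}$ plus a plug-in perturbation, bounds the latter via smoothness of $\Log_{p}q$ in $p$, compactness of $\manifold$, and the uniform rate for $\hat{\mu}$ from Theorem~\ref{thm:mu-asymptotic}, and handles the oracle term by the argument of Theorem 5.2 of \citet{Zhang2016}. Your bilinear expansion of $\hat{L}_{ij}\hat{L}_{il}^{\transpose}-\tilde{L}_{ij}\tilde{L}_{il}^{\transpose}$ is exactly the paper's decomposition of $R_{00}$ into $R_{00}^{\prime}$ plus cross terms, just stated at the level of the (linear) smoother output rather than its numerator.
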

Again, the  above rate gives rise to three regimes that are determined by the growth rate
of $m$ relative to the sample size: (1) When $m\ll(n/\log n)^{1/4}$,
the observations per curve are sparse, and with the optimal choice
$h_{\mu}\asympeq(nm/\log n)^{-1/5}$ and $h_{\Gamma}\asympeq(nm^{2}/\log n)^{-1/6}$,
one has $\sup_{s,t\in\tdomain}\|\hat{\Gamma}(s,t)-\Gamma(s,t)\|_{F}=\Op\left((nm^{2}/\log n)^{-1/3}\right)$;
(2) When $m\asympeq(n/\log n)^{1/4}$, with the optimal choice $h_{\mu}\asympeq h_{\Gamma}\asympeq(n/\log n)^{-1/4}$,
the uniform rate for $\hat{\Gamma}$ is $\Op\left(\{(\log n)/n\}^{1/2}\right)$;
(3) When $m\gg(n/\log n)^{1/4}$, the observations are dense, and
any choice $h_{\mu},h_{\Gamma}=o\left((n/\log n)^{-1/4}\right)$ yields
the uniform rate $\Op\left(\{(\log n)/n\}^{1/2}\right)$.  

 Furthermore, according
to Lemma 4.2 of \citet{Bosq2000}, one has $\sup_k|\hat{\lambda}_k-\lambda_k|\leq\| \hat{\Gamma}-\Gamma\|_{HS}$. It can also be shown that $\|\hat{\Gamma}-\Gamma\|_{HS}\leq |\tdomain|\sup_{s,t\in\tdomain}\|\hat{\Gamma}(s,t)-\Gamma(s,t)\|_{F}$, where $|\tdomain|$ denotes the Lebesgue measure of $\tdomain$. Therefore, the rate for $\hat{\Gamma}$
 provides a convergence rate for all estimated eigenvalues $\hat\lambda_k$. Furthermore, according to Lemma 4.3 of \citet{Bosq2000}, if $\lambda_{k-1}\neq\lambda_{k}$
and $\lambda_{k}\neq\lambda_{k+1}$, then $\|\hat{\phi}_{k}-\phi_{k}\|_{2}^{2}\leq c_{k}\|\hat{\Gamma}-\Gamma\|_{HS}^{2}$, where $c_1=8(\lambda_1-\lambda_2)^{-2}$ and $c_k=8\max\{(\lambda_{k-1}-\lambda_k)^{-2},(\lambda_k-\lambda_{k+1})^{-2}\}$ for $k\geq 2$. Again, by utilizing the fact that $\|\hat{\Gamma}-\Gamma\|_{HS}\leq |\tdomain|\sup_{s,t\in\tdomain}\|\hat{\Gamma}(s,t)-\Gamma(s,t)\|_{F}$, we can derive the convergence rate for $\hat{\phi}_k$. For example, if we assume polynomial decay of  eigenvalue   spacing, i.e., $a_1k^{-\beta}\leq\lambda_k-\lambda_{k+1}\leq a_2k^{-\beta}$ for some constants $a_2\geq a_1>0$ and $\beta>1$, then $\|\hat{\phi}_{k}-\phi_{k}\|_{2}^{2}=\Op(k^{2\beta}\gamma_n)$ where $\gamma_n$ is the rate that appears on the right hand side of \eqref{eq:rate-cov}, and the $\Op$ term is uniform for all $k$.

\section{Data Applications}\label{sec:Application}

\subsection{Emotional Well-Being for Unemployed Workers}
We demonstrate RPACE for the analysis of   longitudinal mood compositional data. 
These data were collected in the Survey of Unemployed Workers in New Jersey \citep{krue:11}, conducted in the fall of 2009 and the beginning of 2010, during which the unemployment rate in the US peaked at 10\% after the financial crisis of 2007--2008. A stratified random sample of unemployed workers were surveyed weekly  for up to 12 weeks. Questionnaires included an entry survey, which assessed demographics, household characteristics and income, and weekly followups, including job search activities and emotional well-being. In each followup questionnaire, participants were asked to report the percentage of time they spent in each of the four moods: bad, low/irritable, mildly pleasant, and good. The overall weekly response rate was around 40\%; see  \cite{krue:11}. 

We analyzed a sample of $n=4771$ unemployed workers enrolled in the study, who were not offered a job during the survey period. The measurement of interest $
Y(t) = [{Y_1(t)}, \dots, {Y_4(t)}]
$
is the longitudinal mood composition, where $Y_j(t)$ is the proportion of time a subject spent in the $j$th mood in the previous 7 days, $j=1, \dots, 4$, recorded on day $t\in[0, 84]$ since the start of the study. The number of responses per subject ranged from 1 to 12, so the data is a mixture of very  sparse and mildly sparse longitudinal observations; for 25\% of all subjects only one response was recorded.  As subjects responded at different days of the week, the observation time points were also irregular. The sparsity and irregularity of the observations poses difficulties for classical analyses and prevents the application of  
the presmooth-and-then-analyze method \citep{dai:17:1}, motivating the application of RPACE, which is geared towards such sparse and irregularly sampled manifold-valued functional data.  

We applied RPACE for the square-root transformed compositional data
\[
X(t) = \{X_1(t), \dots, X_4(t)\} = \{\sqrt{Y_1(t)}, \dots, \sqrt{Y_4(t)}\},
\]
which lie on the sphere $\mathbb{S}^3$ for $t \in [0, 84]$,  since compositional data are non-negative and sum to 1, using bandwidths $h_\mu = 18$ and $h_\Gamma = 36$ days, as selected by GCV,  and the Epanechnikov kernel $K(x)=0.75(1-x^2)$ on $[-1,1]$. The mood composition trajectories for four randomly selected subjects are displayed in the left panel of \autoref{fig:meanFittedNJUI}. The solid dots denote the reported moods, which are slightly jittered vertically if they  overlap, and dashed curves denote the fitted trajectories when selecting  $K=8$ components, selected according to the FVE criterion \eqref{eq:Kstar} with threshold $\gamma=0.99$, which is a reasonable choice in view of the  large sample size. A substantial proportion of the mood compositions is zero, which is no problem for the square-root transformation approach in contrast to the alternative log-ratio transformation 
\citep{aitc:86}, which is undefined when the baseline category is 0. 

As the self-reported moods contain substantial aberrations from smooth trajectories that we view as noise, the fitted trajectories do not go through the raw observations, and are drawn towards the observations for subjects with more repeated measurements. The mean trajectory is displayed in the right panel of \autoref{fig:meanFittedNJUI}, indicating  that the emotional well-being of subjects tends to deteriorate as the period of unemployment lengthens, with an overall increase in the  proportion of bad mood and a decrease in the proportion of good mood. 
 
\begin{figure}
\includegraphics[width=0.55\linewidth]{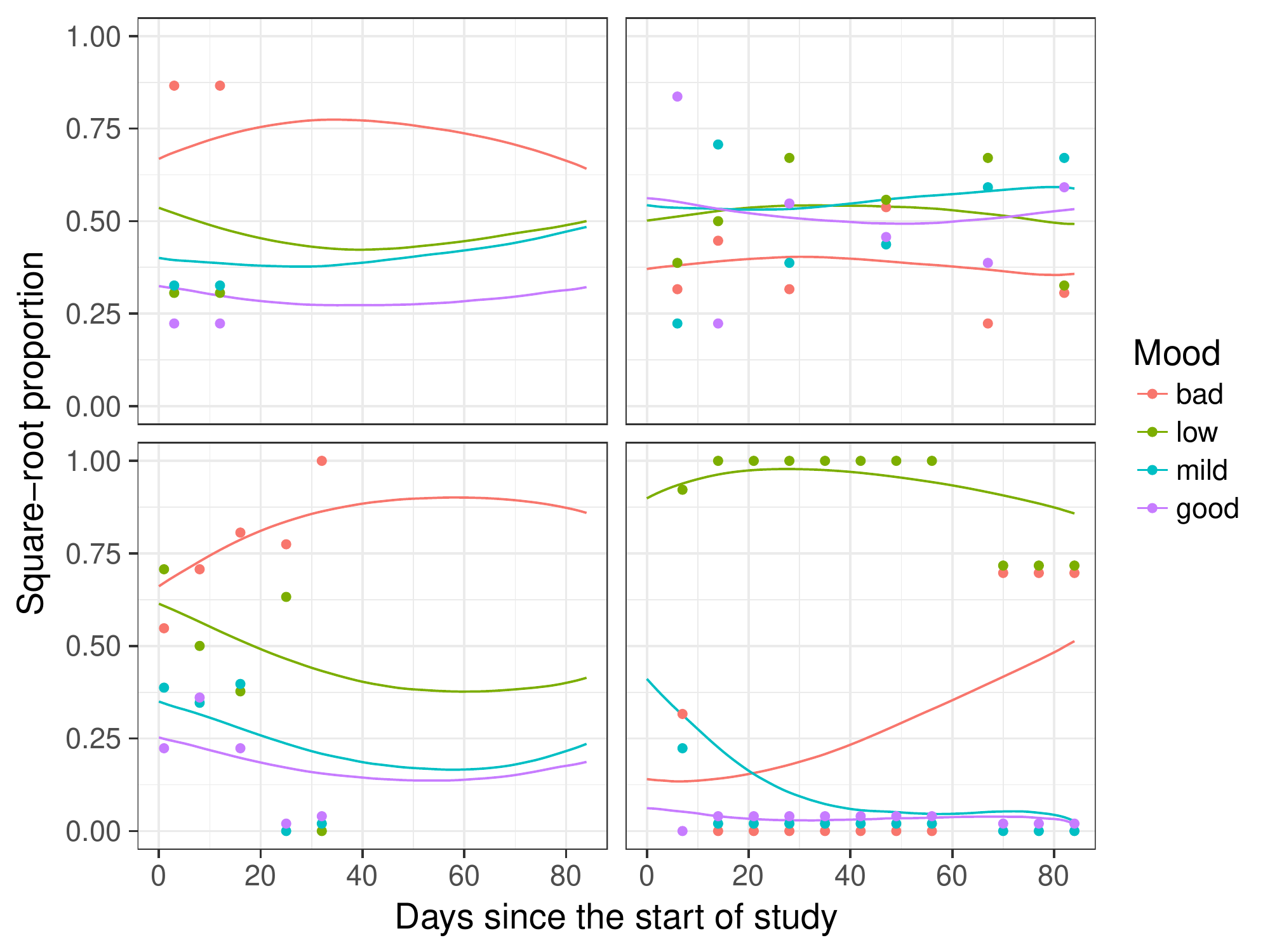}
\includegraphics[width=0.44\linewidth]{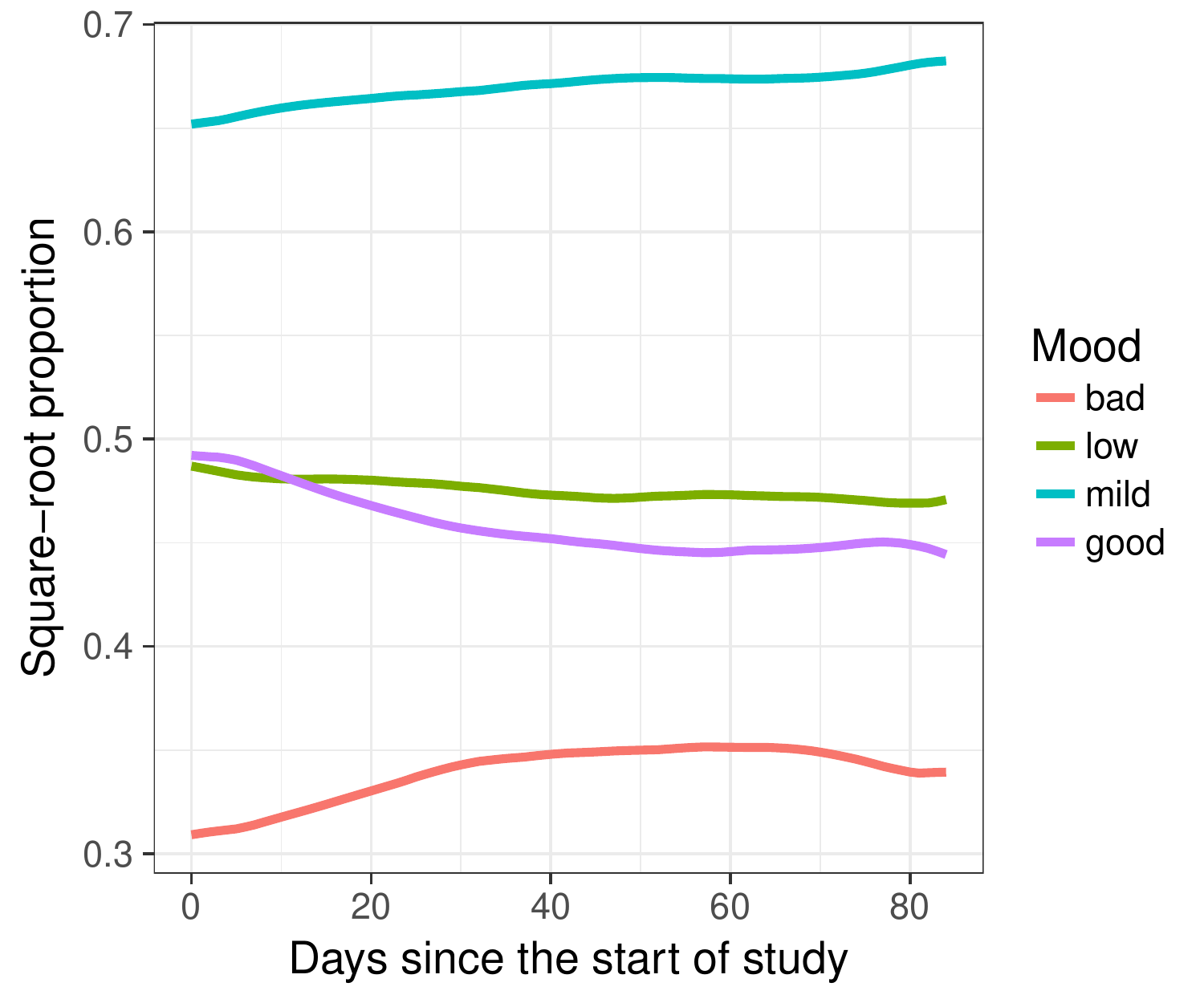}
\caption{Left: Longitudinal mood compositional data  for four randomly selected unemployed workers, with raw observations shown as dots and fitted trajectories by RPACE using   8 eigen-components shown as  solid curves. Overlapping dots  were slightly jittered vertically.  Right: The overall mean function.}
\label{fig:meanFittedNJUI}
\end{figure}

The first four eigenfunctions for mood composition trajectories are shown in \autoref{fig:phiNJUI}, where the  first eigenfunction corresponds to the overall contrast between neutral-to-positive mood (good and mild) and negative moods (low and bad); the second eigenfunction represents emotional stability, which is a contrast between  more neutral moods and extreme emotions (good and bad); the third eigenfunction corresponds to a shift of mood compositions to more positive moods, namely from bad to low and from mild to good; the fourth eigenfunction encodes an increase of positive feelings and a decrease of negative ones over time. Here it is important to note that the sign of the eigenfunctions is arbitrary and could be reversed. The first four eigenfunctions together explain 95\% of the total variation. 

\begin{figure}
\centering
\includegraphics[width=.8\linewidth]{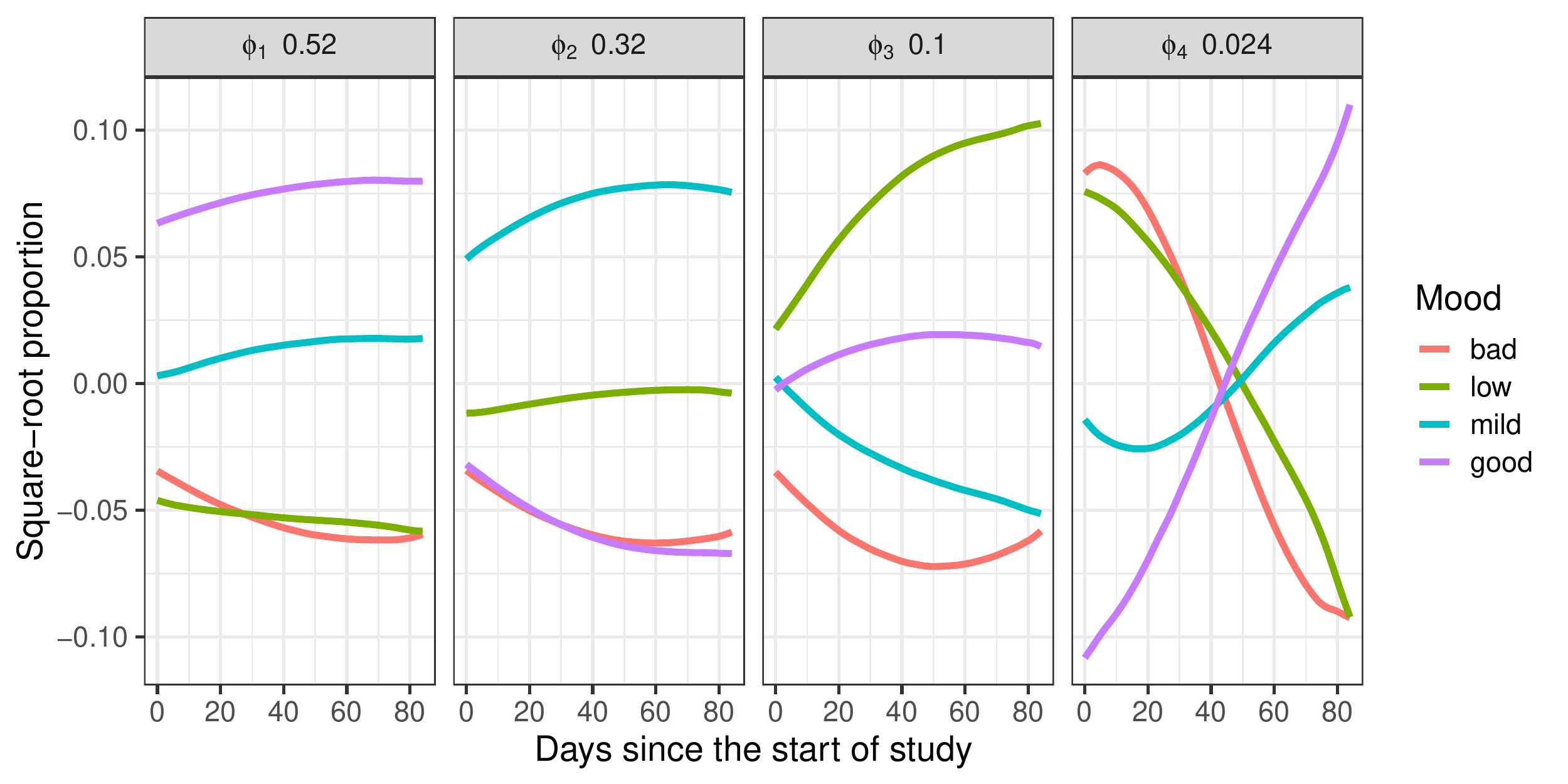}
\caption{The first four eigenfunctions for the mood composition data, with Fraction of Variation Explained (FVE) displayed in the panel subtitles. }
\label{fig:phiNJUI}
\end{figure}

As an example to demonstrate that the scores obtained from RFPC are useful for downstream tasks such as regression, we explored the association between the second RFPC score, corresponding to the proportion of extreme moods, and annual household income in 2008, a measure of financial stability. We extracted the RFPC scores for each subject and constructed kernel density estimates  for $\xi_2$ within each income category; see  \autoref{fig:xi2NJUI}. Participants with higher household income before losing their job and thus higher financial stability tend to have higher emotion stability, as demonstrated by the right-shifted distributions of $\xi_2$ and larger means (colored dots). The relationship between prior income and emotional stability appears to be nonlinear especially for the lower income groups.

\begin{figure}
\centering
\includegraphics[width=.8\linewidth]{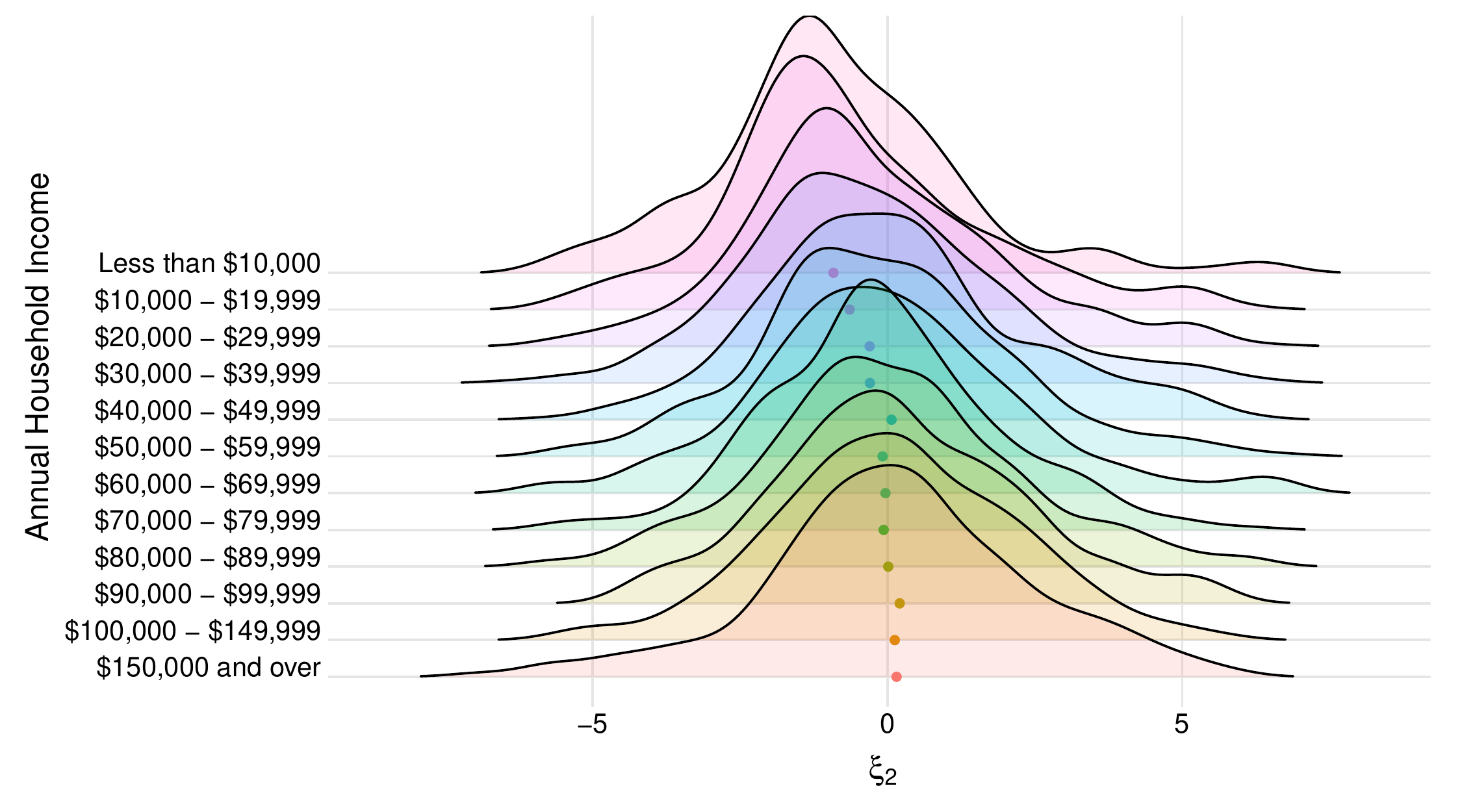}
\caption{The distributions of the second RFPC score, encoding emotion stability, visualized as densities in dependence on the annual household income in 2008. Colored dots indicate the mean of $\xi_2$ for each income group. }
\label{fig:xi2NJUI}
\end{figure}





\subsection{Wallaby Body Shape Growth} \label{ssec:wallaby}
Quantifying the  shapes  of organisms has been a long-standing statistical and mathematical problem  \citep{thom:42,kend:09}. We apply RPACE to analyze the  longitudinal development of body shapes of a sample of Tammar wallabies (\emph{Macropus eugenii}), a small macropod native to Australia (data courtesy of Dr Jeff Wood, CSIRO Biometrics Unit INRE, Canberra, and data cleaning and corrections were performed by Professor Heike Hofmann, Department of Statistics, Iowa State University, in 2008). 
For each of $n=40$ measured wallabies from two locations, longitudinal measurements of the length (in inches) of six body parts $Y_{ij} = (\text{Head}, \text{Ear}, \text{Arm}, \text{Leg}, \text{Foot}, \text{Tail})_{ij}$ were available at age $T_{ij}$ in the first 380 days after birth, for $i=1,\dots, 40$ and $j=1, \dots, n_i$. The measurement time points for the wallabies were highly irregular, and the number of measurements per wallaby varied from 1 to 26, with 14 wallabies having no more than 7 measurements. Typical measurement patterns with mixed sparse and dense observations for each curve are shown in the left panel of  \autoref{fig:meanFittedWallaby}. This   measurement scheme requires methodology that can handle the high degreee of irregularity in the measurement times.

To quantify  shapes of wallabies, we normalized the length measurements $Y_{ij}$ by the Euclidean norm, obtaining $X_{ij} = Y_{ij} / \|Y_{ij}\|_2$, thus emphasizing the relative size of each body part expressed as a percentage of total size, 
leading to longitudinal preshape data \citep{kend:09} that lie on a sphere. The  $X_{ij}$ are  shape characteristics of  wallabies at their respective age. 
RPACE was then applied to the transformed data $(T_{ij}, X_{ij})$ with bandwidths $h_\mu = 18.3$ and $h_\Gamma = 36.6$ selected by GCV, using  the Epanechnikov kernel. The Fr\'echet mean trajectory as displayed in the right panel of \autoref{fig:meanFittedWallaby} shows that  relative to the body size, the tail becomes larger,  while head,  arm and ear   lengths become relatively smaller  throughout the first year of birth; leg and foot lengths increase from birth to roughly 6 months, where relative leg length development peaks before that of foot development. 

\begin{figure}
\centering
\includegraphics[width=0.59\linewidth]{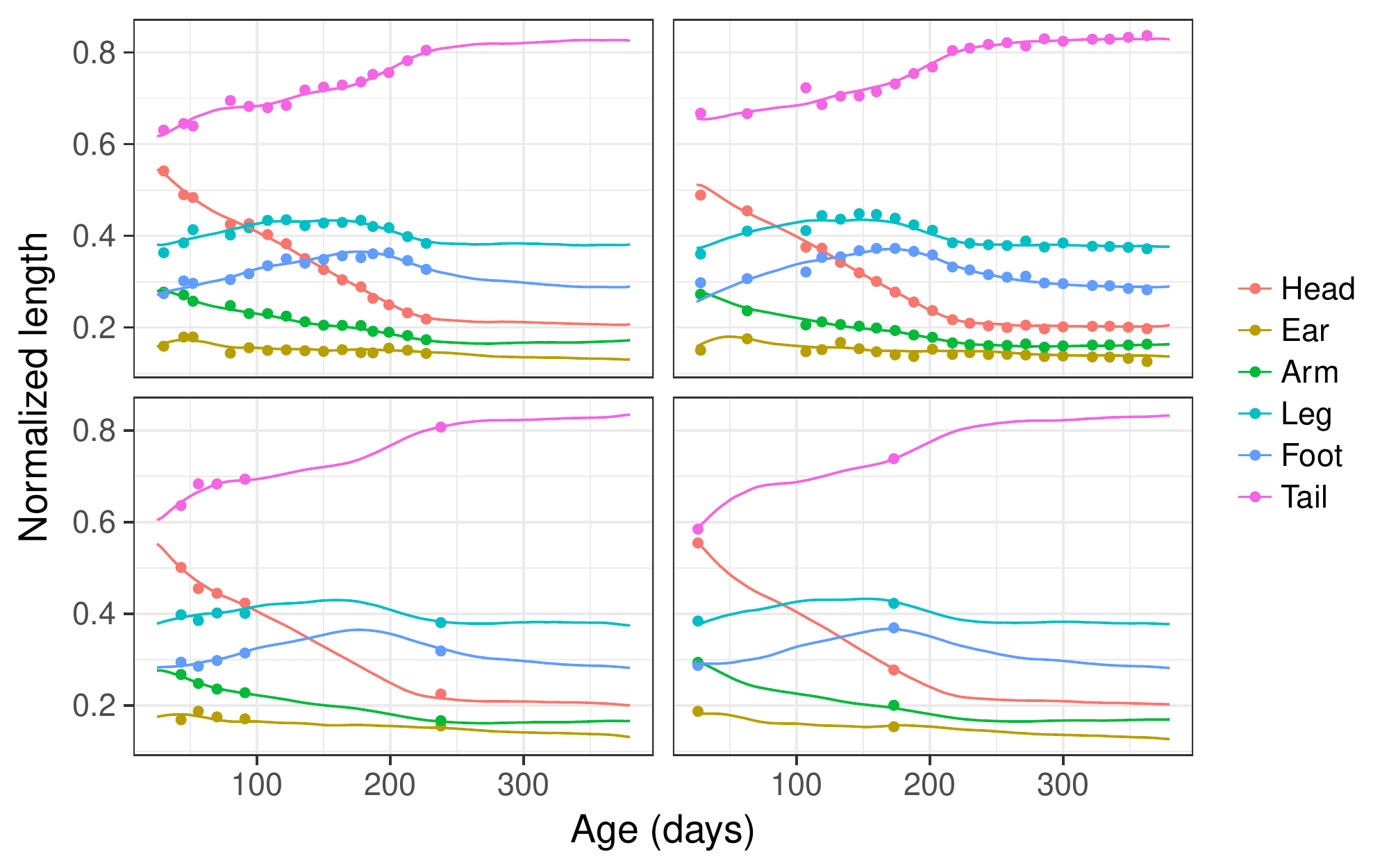}
\includegraphics[width=0.39\linewidth]{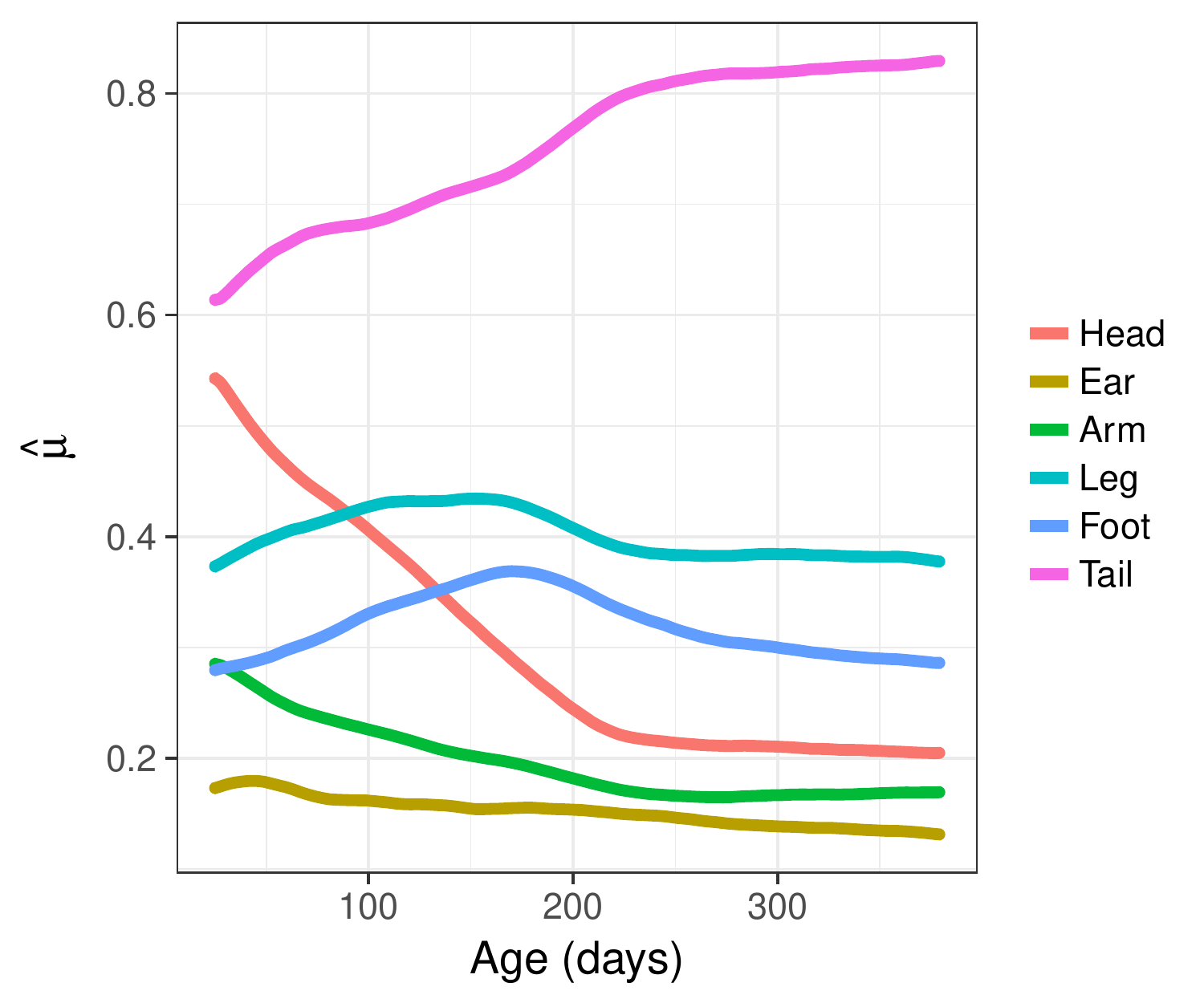}
\caption{Left: Longitudinal shape observations of four randomly selected wallabies, where dots denote raw observations, and solid curves indicates fitted trajectories by 7 components that together explain 90\% of total variation. Right: The mean function of all trajectories.}
\label{fig:meanFittedWallaby}
\end{figure}

To decompose the variation of individual shape trajectories, the first three eigenfunctions are displayed in \autoref{fig:phiWallaby}. The first eigenfunction corresponds to an overall contrast between tail and other body part development, and the second eigenfunction has a large component in the initial tail growth. The first two eigenfunctions together explain 64\% of total variation, showing that  tail length is a main driving force for shape differences. Pairwise scatterplots of the first three RFPC scores are shown in \autoref{fig:xiWallaby}, where each point stands for a single wallaby, and their patterns  indicate two different geographic locations. Shape development differences between locations were mostly reflected in the second component, corresponding to initial tail growth, while the first and third components were less dissimilar. 

\section{Simulation Studies}\label{sec:Simulation}
We demonstrate the performance of the sparse RPACE method for scenarios with varying sample size, sparsity, and manifolds, for which we choose $\manifold = S^2$ or SO$(3)$. Here $S^2$ is the 2-sphere and SO$(3)$  is the manifold consisting of  the $3 \times 3$ orthogonal matrices with determinant 1. 
For each random trajectory $X_i(t)$ on $\manifold$, $i=1,\dots, n$, we sample  $m_i$ observations $(T_{ij}, X_{ij})$, $j=1,\dots, m_i$, where $T_{ij}$ follows a uniform distribution on $\tdomain = [0, 1]$. The number of observations $m_i$ follows a discrete uniform distribution on $\{1, \dots, \mmax\}$, where $\mmax$ is the maximum number of observations per curve that differs between scenarios. 

The sparse observations were generated according to $X_{ij} = \Exp_{\mu(T_{ij})}(L_i(T_{ij})+ \epsilon_{ij} )$, $L_i(t) = \sum_{k=1}^{20}\xi_{ik} \phi_k(t)$, with manifold-specific mean function $\mu(t)$ and eigenfunctions $\phi_j(t)$; RFPC scores $\xi_{ik}$ that follow independent Gaussian distributions with mean zero and variance $\lambda_k=0.05^{k/3}$, for $k = 1, \dots, 20$; and independent Gaussian errors $\epsilon_{ij}$ with mean $0$ and isotropic variance $\sigma^2=0.01$ on the tangent space $T_{\mu(T_{ij})}$. The cumulative FVE for the first $K=1,\dots, 6$ components, defined as $\sum_{j=1}^K \lambda_j/ \sum_{k=1}^\infty \lambda_k$, are 63.2\%, 86.4\%, 95.0\%, 98.15\%, 99.3\%, and 99.8\%, respectively.
For $\manifold=S^2$, we set $\mu(t) = \Exp_{p}(\nu(t))$ where $p=[0, 0, 1]$ and $\nu(t) = [2t/2^{1/2}, 0.3\pi\sin(\pi t), 0]$; eigenfunctions $ 2^{-1/2} R_t [\zeta_k(t / 2), \zeta_k((t + 1)/2), 0]^T$, with $R_t$ being the rotation matrix from $p$  to $\mu(t)$, and $\{\zeta_k\}_{k=1}^{20}$ the orthonormal cosine basis on $[0,1]$. 
For $\manifold=\text{SO}(3)$, $\mu(t) = \expm(\iota(2t, 0.3\pi\sin(\pi t), 0))$ and $\phi_k(t) = 3^{-1/2}\iota(\zeta_k(t / 3), \zeta_k((t + 1)/ 3), \zeta_k((t+2)/3))$, where $\expm$ is the matrix exponential and $\iota: \mathbb{R}^3\rightarrow \mathbb{R}^{3\times 3}$ maps a vector $v$ to a skew-symmetric matrix whose lower diagonal elements (ordered by column) are $v$.
We investigated three settings with varying sparsity and sample size: 
Scenario 1 (baseline): $n = 100$, $\mmax = 20$;
Scenario 2 (sparse): $n = 100$, $\mmax = 5$;
Scenario 3 (small $n$): $n = 50$, $\mmax = 20$.

\begin{figure}
\centering
\includegraphics[width=.8\linewidth]{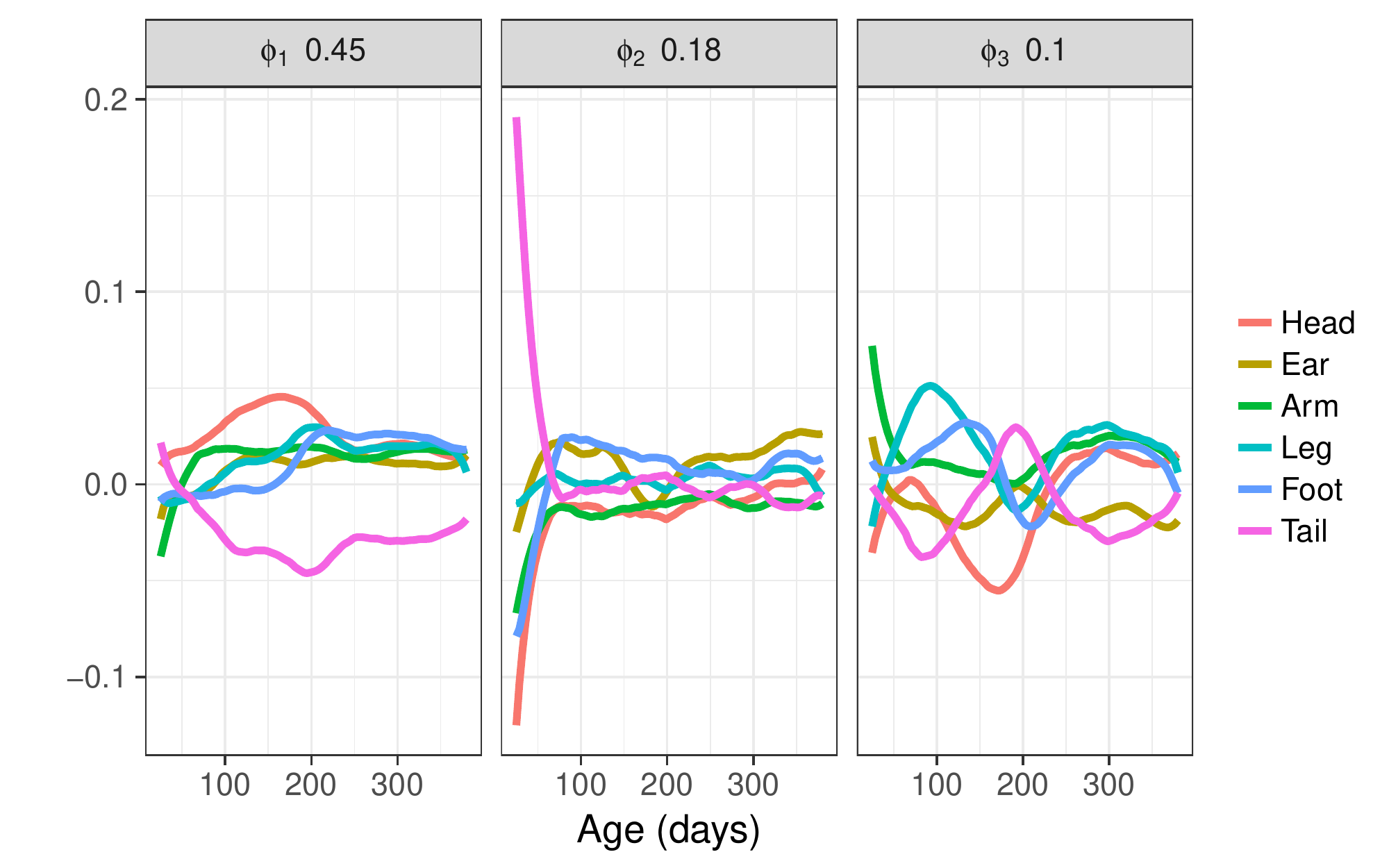}
\caption{The first four eigenfunctions for the wallaby shapes, with Fraction of Variation Explained (FVE) displayed in the panel subtitles. }
\label{fig:phiWallaby}
\end{figure}

\begin{figure}
\centering
\includegraphics[width=0.65\linewidth]{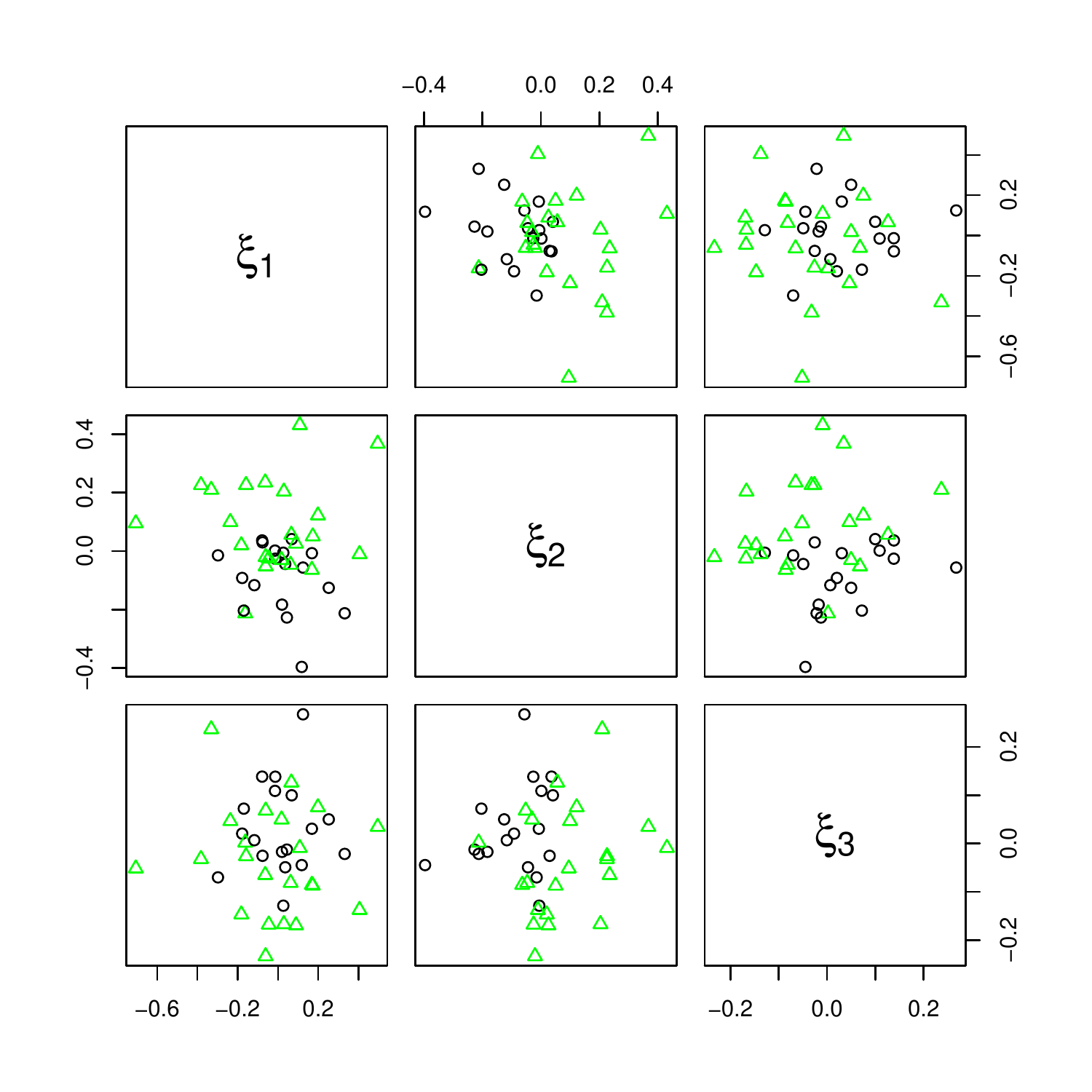}
\caption{Pairwise scatter plots of the first three RFPC scores for the Wallaby data, where different point patterns represent Wallabies from  two distinct geographic  locations. }
\label{fig:xiWallaby}
\end{figure}

Three different FPCA approaches were evaluated for these scenarios, namely an extrinsic FPCA by  \cite{chio:14}, an extrinsic multivariate FPCA via componentwise FPCA (CFPCA) by \cite{happ:18},  and the proposed RPACE. The extrinsic FPCA is a multivariate FPCA 
applied to the sparse manifold-valued data as if they are objects in the ambient Euclidean space. In the CFPCA approach 
one  first fits an FPCA to each of the $D$ components functions, and then applies a second PCA to the pooled component scores to obtain summarized scores and multidimensional eigenfunctions. For sample trajectories with small variation around the mean,  the  extrinsic FPCA methods (FPCA and CFPCA) can be regarded as linear approximations to  RPACE.
The Epanechnikov kernel was used for the smoothers, with bandwidth $h_\mu$ selected by GCV and $h_\Gamma=2h_\mu$. 

Using 200 Monte Carlo experiments, we report the average Root Mean Integrated Squared Errors (RMISE) for the fitted trajectories, defined as 

\[
\text{RMISE} = \sqrt{\frac{1}{200} \sum_{b=1}^{200} \frac{1}{D}\int_0^1 d_{\manifold}(\hat{X}_K(t), X(t))^2 dt},
\]
for $K=1, \dots, 6$ in \autoref{tab:fitted}. Since the fitted trajectories using FPCA and CFPCA lie in the ambient space but not on $\manifold$, we projected them back to the manifold by normalizing the norm of $\hat{X}_K(t)$ for $\manifold=S^2$ or the eigenvalues of the matrix representation of $\hat{X}_K(t)$ for $\manifold=\text{SO}(3)$.
RPACE was the overall best performer across the  various scenarios, especially for the  more parsimonious models.  Scenario 2 (sparse) is considerably more difficult than Scenario 1, and  smaller models with $K \le 4$ performed better. 
 RPACE also works well for the  smaller sample size $n=50$ in Scenario 3. 


\begin{table}[ht]
\centering
\caption{RIMSE for fitted trajectories. The standard errors were smaller than $3\times 10^{-3}$ for all cases: $X_E$, extrinsic Functional Principal Component Analysis \citep{chio:14}; $X_C$, componentwise Functional Principal Component Analysis \citep{happ:18}; $X_R$, proposed Riemannian Principal Component Analysis by Conditional Expectation (RPACE).}\vspace{.3cm}
\label{tab:fitted}
\begingroup\small\setlength{\tabcolsep}{0.3em}
\begin{tabular}{crrrrrr|rrrrrr|rrrrrr}
	    &                     \multicolumn{6}{c|}{Scenario 1 (baseline)}                      &                      \multicolumn{6}{c|}{Scenario 2 (sparse)}                       &                     \multicolumn{6}{c}{Scenario 3 (small $n$)}                     \\
	    & \multicolumn{3}{c}{$\manifold=S^2$} & \multicolumn{3}{c|}{$\manifold=\text{SO}(3)$} & \multicolumn{3}{c}{$\manifold=S^2$} & \multicolumn{3}{c|}{$\manifold=\text{SO}(3)$} & \multicolumn{3}{c}{$\manifold=S^2$} & \multicolumn{3}{c}{$\manifold=\text{SO}(3)$} \\
	$K$ & $X_E$ & $X_C$ & $X_R$ & \; $X_E$ & $X_C$ &                         $X_R$ & $X_E$ & $X_C$ &               $X_R$ & \; $X_E$ & $X_C$ &                         $X_R$ & $X_E$ & $X_C$ &               $X_R$ & \; $X_E$ & $X_C$ &                        $X_R$ \\
	\hline
	$1$ &  0.23 &  0.23 &           0.21 &  0.24 &  0.24 &                          0.22 &  0.26 &  0.27 &                0.24 &  0.26 &  0.27 &                          0.24 &  0.23 &  0.23 &                0.21 &  0.24 &  0.23 &                         0.21 \\
	$2$ &  0.12 &  0.12 &           0.09 &  0.12 &  0.12 &                          0.09 &  0.16 &  0.17 &                0.14 &  0.15 &  0.16 &                          0.12 &  0.12 &  0.12 &                0.09 &  0.12 &  0.12 &                         0.09 \\
	$3$ &  0.08 &  0.09 &           0.05 &  0.08 &  0.08 &                          0.04 &  0.13 &  0.15 &                0.11 &  0.11 &  0.13 &                          0.08 &  0.08 &  0.08 &                0.05 &  0.08 &  0.08 &                         0.04 \\
	$4$ &  0.05 &  0.06 &           0.04 &  0.05 &  0.05 &                          0.03 &  0.11 &  0.14 &                0.10 &  0.09 &  0.11 &                          0.07 &  0.05 &  0.06 &                0.04 &  0.05 &  0.05 &                         0.03 \\
	$5$ &  0.05 &  0.05 &           0.04 &  0.04 &  0.04 &                          0.02 &  0.10 &  0.13 &                0.10 &  0.08 &  0.10 &                          0.07 &  0.05 &  0.05 &                0.04 &  0.04 &  0.04 &                         0.02 \\
	$6$ &  0.04 &  0.05 &           0.04 &  0.03 &  0.04 &                          0.02 &  0.10 &  0.13 &                0.10 &  0.08 &  0.10 &                          0.07 &  0.04 &  0.05 &                0.04 &  0.03 &  0.04 &                         0.02

\end{tabular}
\endgroup
\end{table}

%
%
%
%
%
%
%
%
%
%
%

\section*{Appendix}
\subsubsection*{Proofs of Main Results}

\begin{proof}[Proof of Theorem \ref{thm:mu-asymptotic}]
According to Lemma \ref{lem:Qn-consistency}, where all auxiliary results are given in the next section, 
\begin{equation}
\sup_{p\in\manifold}\sup_{t\in\tdomain} |\tilde{Q}_{h_\mu}(p,t) - M(p, t)|  = O(h_\mu^2).
\end{equation}
With Lemma \ref{lem:L0-L2}, by similar arguments as in  Theorem 3 in \citet{mull:18:3},   
\begin{equation}
\sup_{t\in\tdomain}d_{\manifold}^{2}(\tilde{\mu}(t),\mu(t))=O(h_{\mu}^{4}) \label{eq:mean-bias}
\end{equation}
as $h_{\mu}\rightarrow0$ for $h_\mu = O(n^{-1/2})$. 

This result, combined with Lemma \ref{lem:variance-rate},
yields (\ref{eq:mean-rate}) for $m_{i}\equiv m$. The proof for the general
case follows  the same lines.
\end{proof}
\begin{proof}[Proof of Theorem \ref{thm:covariance-rate}]
We prove the theorem for $m_{i}\equiv m$, while the proof for the
general case is similar.  We will 
use $h$ to denote $h_{\Gamma}$ throughout the proof. Observe
\[
\hat{\Gamma}(s,t)=\frac{(S_{20}S_{02}-S_{11}^{2})R_{00}-(S_{10}S_{02}-S_{01}S_{11})R_{10}+(S_{10}S_{11}-S_{01}S_{20})R_{01}}{(S_{20}S_{02}-S_{11}^{2})S_{00}-(S_{10}S_{02}-S_{01}S_{11})S_{10}+(S_{10}S_{11}-S_{01}S_{20})S_{01}},
\]
where for $a,b=0,1,2$,
\begin{align*}
S_{ab} & =\sum_{i=1}^{n}v_{i}\sum_{1\leq j\neq l\neq m_{i}}K_{h}(T_{ij}-s)K_{h}(T_{il}-t)\left(\frac{T_{ij}-s}{h}\right)^{a}\left(\frac{T_{ij}-t}{h}\right)^{b},\\
R_{ab} & =\sum_{i=1}^{n}v_{i}\sum_{1\leq j\neq l\neq m_{i}}K_{h}(T_{ij}-s)K_{h}(T_{il}-t)\left(\frac{T_{ij}-s}{h}\right)^{a}\left(\frac{T_{ij}-t}{h}\right)^{b}\Gamma_{ijl}.
\end{align*}
Let $\delta_{ijl}=(\Log_{\mu(T_{ij})}Y_{ij})(\Log_{\mu(T_{il})}Y_{il})^{\transpose}$.
Then \newline 
$R_{00}=  R_{00}^{\prime}+(\Log_{\hat{\mu}(T_{ij})}Y_{ij}-\Log_{\mu(T_{ij})}Y_{ij})(\Log_{\hat{\mu}(T_{il})}Y_{il})^{\transpose}+\\
  (\Log_{\hat{\mu}(T_{ij})}Y_{ij})(\Log_{\hat{\mu}(T_{il})}Y_{il}-\Log_{\mu(T_{ij})}Y_{ij})^{\transpose}+\\
  (\Log_{\hat{\mu}(T_{ij})}Y_{ij}-\Log_{\mu(T_{ij})}Y_{ij})(\Log_{\hat{\mu}(T_{il})}Y_{il}-\Log_{\mu(T_{ij})}Y_{ij})^{\transpose},$
where
\begin{align*}
R_{ab}^{\prime} & =\sum_{i=1}^{n}v_{i}\sum_{1\leq j\neq l\neq m_{i}}K_{h}(T_{ij}-s)K_{h}(T_{il}-t)\left(\frac{T_{ij}-s}{h}\right)^{a}\left(\frac{T_{ij}-t}{h}\right)^{b}\delta_{ijl}.
\end{align*}
Given the smoothness of $\Log_{p}q$ with respect to $p$ and the
compactness of $\manifold$, 
\[
\|R_{00}-R_{00}^{\prime}\|_{F}^{2}\leq c\sup_{t\in\tdomain}d_{\manifold}^{2}(\hat{\mu}(t),\mu(t))=\Op(h_{\mu}^{4}+\frac{\log n}{n}+\frac{\log n}{nmh_{\mu}}),
\]
with similar results for $R_{ab}$ for $a,b=0,1,2$.
Setting  
\[
\tilde{\Gamma}(s,t)=\frac{(S_{20}S_{02}-S_{11}^{2})R_{00}^{\prime}-(S_{10}S_{02}-S_{01}S_{11})R_{10}^{\prime}+(S_{10}S_{11}-S_{01}S_{20})R_{01}^{\prime}}{(S_{20}S_{02}-S_{11}^{2})S_{00}-(S_{10}S_{02}-S_{01}S_{11})S_{10}+(S_{10}S_{11}-S_{01}S_{20})S_{01}},
\]
we have 
\[
\|\hat{\Gamma}(s,t)-\tilde{\Gamma}(s,t)\|_{F}^{2}=\Op(h_{\mu}^{4}+\frac{\log n}{n}+\frac{\log n}{nmh_{\mu}}),
\]
whence by the same argument as in  Theorem 5.2 in \citet{Zhang2016}, 
$\sup_{s,t\in\tdomain}\|\tilde{\Gamma}(s,t)-\Gamma(s,t)\|_{F}^{2}=\Op\{h^{4}+(\log n)(n^{-1}m^{-1}h^{-2}+n^{-1}h^{-1}+n^{-1})\}$,
and the result follows.
\end{proof}
\subsubsection*{Technical Lemmas}
\begin{lem}
\label{lem:Qn-consistency} Assume conditions \ref{cond:X0}, \ref{cond:X1}, \ref{cond:X2}, \ref{cond:M0},
\ref{cond:K0}, \ref{cond:L0}, \ref{cond:L1}  and
\ref{cond:H1} hold. Then for any $\epsilon>0$,
\begin{gather}
\inf_{t\in\tdomain}\inf_{\epsilon<d_{\manifold}(y,\mu(t))}\{M(y,t)-M(\mu(t),t)\}>0.\label{eq:inf-inf-1}\\
\sup_{p\in\manifold}\sup_{t\in\tdomain} |\tilde{Q}_{h_\mu}(p,t) - M(p,t)| =O(h^2_\mu)= o(1), \label{eq:Qn-consistency}\\
\sup_{t\in\tdomain}d_{\manifold}(\tilde{\mu}(t),\mu(t))=o(1).\label{eq:tmu-consistency}
\end{gather}

%
\end{lem}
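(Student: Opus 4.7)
The plan is to establish the three displayed equations sequentially, using the earlier ones to derive the later ones via standard M-estimation machinery.

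For the separation inequality \eqref{eq:inf-inf-1}, the approach is compactness plus contradiction. Under \ref{cond:M0} both $\manifold$ and $\tdomain$ are compact, and combining the second-order condition \ref{cond:X0} with \ref{cond:X2} (twice continuously differentiable sample paths) and dominated convergence, the Fr\'{e}chet function $M(p,t)=\mathbb{E} d_{\manifold}^{2}(p,X(t))$ is jointly continuous in $(p,t)$. Uniqueness of $\mu(t)$ from \ref{cond:L0}, combined with a Berge-type argmin-continuity argument on a compact set, yields continuity of $t\mapsto\mu(t)$. If the infimum in \eqref{eq:inf-inf-1} were zero I would extract sequences $t_n\to t^\ast$ and $y_n\to y^\ast$ with $d_{\manifold}(y_n,\mu(t_n))\ge\epsilon$ and $M(y_n,t_n)-M(\mu(t_n),t_n)\to 0$; passing to the limit by continuity produces a minimizer $y^\ast$ of $M(\cdot,t^\ast)$ at distance at least $\epsilon$ from $\mu(t^\ast)$, contradicting uniqueness.

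For the uniform bias bound \eqref{eq:Qn-consistency}, I would recognize $\tilde{Q}_{h_\mu}(p,t)$ as the population-level local linear smoother of the conditional expectation $g_p(s):=\mathbb{E}[d_{\manifold}^{2}(Y,p)\mid T=s]$ at the point $t$, with weights $\omega(\cdot,t,h_\mu)$. A standard Taylor expansion (cf.\ Fan and Gijbels, 1996) gives
\[
\tilde{Q}_{h_\mu}(p,t)-g_p(t)=\tfrac{1}{2}h_\mu^{2}\, g_p''(t)\int x^{2}K(x)\,dx+o(h_\mu^{2}),
\]
with boundary-region behavior handled automatically by the local linear construction through the $u_1$ correction built into $\omega$. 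The required smoothness $g_p\in C^2(\tdomain)$ with $\sup_{p,s}|g_p''(s)|<\infty$ follows from \ref{cond:X2}, \ref{cond:L1}, and the boundedness in \ref{cond:M0} (which justifies differentiation under the expectation via dominated convergence). Uniformity in $(p,t)$ then upgrades the bias to the claimed $O(h_\mu^{2})$. The tangent-space noise $\varepsilon$ connecting $Y$ to $X$ contributes to $g_p(t)-M(p,t)$ a term that is constant in $p$ for each $t$ under the isotropy assumption, so it can be absorbed into the $O(h_\mu^2)$ term as stated.

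For \eqref{eq:tmu-consistency}, I apply the classical M-estimator consistency argument. The defining inequality $\tilde{Q}_{h_\mu}(\tilde{\mu}(t),t)\le \tilde{Q}_{h_\mu}(\mu(t),t)$ together with \eqref{eq:Qn-consistency} gives
\[
M(\tilde{\mu}(t),t)-M(\mu(t),t)\le 2\sup_{p\in\manifold,\,t\in\tdomain}|\tilde{Q}_{h_\mu}(p,t)-M(p,t)|=O(h_\mu^{2})=o(1)
\]
uniformly in $t$, and the separation bound \eqref{eq:inf-inf-1} then forces $\sup_{t\in\tdomain}d_{\manifold}(\tilde{\mu}(t),\mu(t))=o(1)$.

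The main obstacle will be the uniform-in-$p$ control of $g_p''(s)$ in the second step: one must differentiate $\mathbb{E}[d_{\manifold}^{2}(Y,p)\mid T=s]$ twice in $s$ and bound the result uniformly over $p\in\manifold$. Justifying the interchange of the two derivatives with the expectation requires careful domination, leveraging the $C^2$ sample-path regularity in \ref{cond:X2}, compactness from \ref{cond:M0}, and the $C^2$ design density \ref{cond:L1}. A secondary subtlety is reconciling the noise-inflated target $g_p$ with the noise-free Fr\'{e}chet function $M$ in the statement of \eqref{eq:Qn-consistency}, which is controlled via isotropy of $\varepsilon$ from the model in Section \ref{subsec:sm}.
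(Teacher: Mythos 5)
Your proposal follows essentially the same route as the paper: \eqref{eq:inf-inf-1} from uniqueness plus compactness and continuity (the paper simply asserts this as a consequence of \ref{cond:L0}), \eqref{eq:Qn-consistency} by Taylor-expanding the population local-linear bias using $\expect\{\omega(T,t,h_\mu)\}=1$, $\expect\{(T-t)\omega(T,t,h_\mu)\}=0$ and the uniform bounds $u_k(t)=O(h_\mu^k)$, and \eqref{eq:tmu-consistency} by the standard M-estimation consistency argument. The one caveat concerns your handling of the measurement noise: a $p$-independent, non-vanishing offset in $\expect\{d_{\manifold}^2(Y,p)\mid T=t\}-M(p,t)$ cannot be ``absorbed into the $O(h_\mu^2)$ term'' --- it is harmless for the argmin and hence for \eqref{eq:tmu-consistency}, but the literal bound in \eqref{eq:Qn-consistency} requires identifying $M(p,t)$ with the Fr\'echet function of the noisy observation $Y$ at $t$, which is precisely the identification the paper's proof makes silently when it writes $\expect\{d_{\manifold}^2(Y,p)\mid T\}=M(p,T)$.
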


\begin{proof}
Equation \eqref{eq:inf-inf-1} is implied by \ref{cond:L0}. For \eqref{eq:Qn-consistency}, first obtain the auxiliary result
\begin{equation}\label{eq:uu-rate}
u_{k}(t)=O(h_{\mu}^{k}),
\end{equation}
for $k=0,1,2$, where the $O(h_{\mu}^{k})$ term is uniform over  $t\in\tdomain$ and is bounded away from 0 for $k=0, 2$. This is due to change of variables and a Taylor expansion for  $f$,
\begin{align*}
\expect\{ K_{h_\mu}(T-t)\left(\frac{T-t}{h_\mu}\right)^k \} & = \int_{-t/h_\mu}^{(1-t)/h_\mu} s^k K(s) f(t + h_\mu s) ds \\
& = (f(t)+O(h))\int_{\max(-1,-t/h_\mu)}^{\min(1,(1-t)/h_\mu)} s^k K(s) ds \\
& = O(1),
\end{align*}
where the $O(1)$ term is uniform over $t$ and bounded away from 0 for $k=0,2$ by \ref{cond:L1}. 
Then
\begin{align}
\tilde{Q}_{h_\mu}(p,t) - M(p, t) & = \expect\{ \expect\{ d_\manifold^2(Y,p)|T\} \omega(T,t,h_\mu) \} - M(p, t)  \nonumber  \\
& = \expect\{ M(p,T) \omega(T,t,h_\mu) \} - M(p, t) \nonumber  \\
& = \expect\{ \frac{\partial}{\partial t}  M(p, t)(T-t)\omega(T,t,h_\mu) \} + 
\expect\{ \frac{\partial^2 }{\partial t^2} M(p,\vartheta) (T-t)^2 \omega(T,t,h) \}  \nonumber \\
& =  \frac{1}{u_2(t)u_0(t) - u_1(t)^2} [ u_2(t) \expect\{\frac{\partial^2 }{\partial t^2} M(p,\vartheta) K_{h_\mu}(T-t) (T-t)^2 \} - \nonumber\\
& \quad\quad u_0(t) \expect\{\frac{\partial^2 }{\partial t^2} M(p,\vartheta) K_{h_\mu}(T-t)(T-t)^3 \} ] \nonumber \\
& = O(h_\mu^2), \label{eq:tQn-rate}
\end{align}
where $\vartheta$ is between $T$ and $t$, the third equality is due to applying Taylor's theorem on $M(p, \cdot)$, the fourth to $\expect[(T-t)\omega(T,t,h_\mu)]=0$, and the last to \eqref{eq:uu-rate} and the continuity and boundedness in $\partial^2 M/\partial t^2(p,t)$, as  implied by \ref{cond:M0} and \ref{cond:X2}. Note that the rate \eqref{eq:tQn-rate} is uniform over $p\in \manifold$ and $t\in\tdomain$, so we obtain \eqref{eq:Qn-consistency}.

By M-estimation theory \citep[e.g. Corollary~3.2.3 in][]{vand:00}, \eqref{eq:Qn-consistency} and \eqref{eq:inf-inf-1} imply \eqref{eq:tmu-consistency}.
\end{proof}

%
%

\begin{lem}\label{lem:L0-L2}
Under conditions \ref{cond:M0}, \ref{cond:K0}, \ref{cond:L0}--\ref{cond:L2}, \ref{cond:X0}--\ref{cond:X2}, and $h_{\mu}\rightarrow0$, there exist constants $C>0$ and $\eta>0$ such that for all $t\in\tdomain$, 
\be
& M(y,t)-M(\mu(t),t) \geq  Cd^2_{\manifold}(y,\mu(t)), \label{eq:M-curvature}\\
& \underset{n}{\lim\inf}\{\tilde{Q}_{h}(y,t)-\tilde{Q}_{h}(\tilde{\mu}(t),t)-Cd^2_{\manifold}(y,\tilde{\mu}(t))\}  \geq  0, \label{eq:Qn-curvature}
\ee
if $d_{\manifold}(y,\mu(t))<\eta$ and $d_{\manifold}(y,\tilde{\mu}(t))<\eta$, respectively. 


\end{lem}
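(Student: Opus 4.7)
\bigskip
\noindent\textbf{Proof proposal.}

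The plan is to derive both bounds via local Taylor expansions in normal coordinates centered at the relevant minimizer, with condition \ref{cond:L2} providing the quadratic lower bound and \ref{cond:M0}, \ref{cond:X2} providing uniform control of the remainders.

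For \eqref{eq:M-curvature}, I would introduce normal coordinates $v=\Log_{\mu(t)}y\in T_{\mu(t)}\manifold$ so that $\|v\|=d_{\manifold}(y,\mu(t))$, and write $M(y,t)-M(\mu(t),t)=G_{\mu(t)}(v,t)-G_{\mu(t)}(0,t)$. Since $\mu(t)$ is the unique minimizer of $M(\cdot,t)$ by \ref{cond:L0}, the first-order condition gives $\partial_v G_{\mu(t)}(0,t)=0$. Taylor's theorem then yields
\[
G_{\mu(t)}(v,t)-G_{\mu(t)}(0,t)=\tfrac{1}{2}v^{\transpose}H(t)v+R(v,t),
\]
where $H(t)=\partial^{2}_{v}G_{\mu(t)}(v,t)|_{v=0}$ and $R(v,t)=o(\|v\|^{2})$. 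Condition \ref{cond:L2} gives $\inf_{t}\lambda_{\min}(H(t))\geq c_{0}>0$, and the joint smoothness of $(p,v,t)\mapsto G_{p}(v,t)$ (inherited from \ref{cond:M0}, \ref{cond:X2}) together with the compactness of $\tdomain$ and of a tubular neighborhood of $\mu(\tdomain)$ in $\manifold$ makes the remainder bound uniform in $t$. Hence there exists $\eta_{1}>0$ such that $R(v,t)\geq -(c_{0}/4)\|v\|^{2}$ whenever $\|v\|<\eta_{1}$ and $t\in\tdomain$, giving \eqref{eq:M-curvature} with $C=c_{0}/4$.

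For \eqref{eq:Qn-curvature}, I would parallel the above argument around $\tilde{\mu}(t)$ using the analogous object $G^{(h)}_{p}(v,t)\define\tilde{Q}_{h}(\Exp_{p}v,t)$. By definition of $\tilde{\mu}(t)$ and \ref{cond:L0}, $\partial_{v}G^{(h)}_{\tilde{\mu}(t)}(v,t)|_{v=0}=0$, so
\[
\tilde{Q}_{h}(y,t)-\tilde{Q}_{h}(\tilde{\mu}(t),t)=\tfrac{1}{2}v^{\transpose}H^{(h)}(t)v+R^{(h)}(v,t),\qquad v=\Log_{\tilde\mu(t)}y.
\]
The substantive step is to show $\inf_{t}\lambda_{\min}(H^{(h)}(t))\to\inf_{t}\lambda_{\min}(H(t))\geq c_{0}$ as $h\rightarrow 0$. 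Writing $\tilde{Q}_{h}(p,t)=\expect\{\omega(T,t,h)\,d_{\manifold}^{2}(Y,p)\}$, I would differentiate twice under the expectation in the normal-coordinate variable at $p=\tilde\mu(t)$ and apply the same Taylor-in-$T$ argument used to establish \eqref{eq:Qn-consistency} in Lemma \ref{lem:Qn-consistency}, together with the uniform consistency $\sup_{t}d_{\manifold}(\tilde{\mu}(t),\mu(t))=o(1)$, to obtain $\sup_{t}\|H^{(h)}(t)-H(t)\|=o(1)$. Finally, as in the first part, the uniform smoothness of $(p,v,t)\mapsto G^{(h)}_{p}(v,t)$ (with derivatives bounded uniformly in $h$ because $\omega(\cdot,t,h)$ integrates to one plus a controlled error) yields a uniform $o(\|v\|^{2})$ bound on $R^{(h)}$, producing \eqref{eq:Qn-curvature} with the same constant $C=c_{0}/4$ for all sufficiently large $n$.

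The main obstacle is the uniform convergence of the Hessians $H^{(h)}(t)\to H(t)$: this requires interchanging two derivatives with the expectation defining $\tilde{Q}_{h}$ and then tracking a Taylor expansion of the integrand in $T$ around $t$, all uniformly in $t$ and in the base point $p$ varying over a neighborhood of the moving curve $\mu(\tdomain)$. A secondary subtlety is that the expansion is centered at the $h$-dependent point $\tilde\mu(t)$, so the quadratic form and remainder must be controlled uniformly along a family of moving base points; this is handled by combining the Hessian convergence with the uniform smoothness of $G_p$ in $p$, restricted to a neighborhood of $\mu(\tdomain)$ guaranteed by \ref{cond:M0} and \ref{cond:X2}.
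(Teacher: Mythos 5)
Your proposal is correct and follows essentially the same route as the paper: normal coordinates via $G_p(v,t)=M(\Exp_p v,t)$, a second-order Taylor expansion with the first-order term vanishing at the minimizer, condition \ref{cond:L2} for the uniform quadratic lower bound, and for \eqref{eq:Qn-curvature} the uniform $O(h_\mu^2)$ convergence of the Hessian of $\tilde{Q}_h$ to that of $M$ combined with the consistency of $\tilde\mu$. The only cosmetic difference is that the paper writes the expansion in Lagrange form with the Hessian evaluated at an intermediate point $v^*$ (yielding $C=\lambda_{\min}/2$), whereas you use the Hessian at the center plus an $o(\|v\|^2)$ remainder (yielding $C=c_0/4$); both are equivalent given the uniform smoothness you invoke.
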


\begin{proof}
Recall $G_p(v,t) = M(\Exp_p v,t)$ as defined in \ref{cond:L2}, and define $v = \Log_{\mu(t)}(y)$. For each $t\in \tdomain$, apply a Taylor expansion  to obtain
\begin{align}
M(y,t)-M(\mu(t),t) & = G_{\mu(t)}(v,t) - G_{\mu(t)}(0,t) \nonumber \\
& = \innerprod{\frac{\partial^{2}}{\partial v^{2}}G_{\mu(t)}(v^*,t) v}v_{\mu(t)}\nonumber\\
& \geq  \lambda_{\min}(\frac{\partial^{2}}{\partial v^{2}}G_{\mu(t)}(v^*,t)) \innerprod vv_{\mu(t)} \nonumber\\
& =\lambda_{\min}(\frac{\partial^{2}}{\partial v^{2}}G_{\mu(t)}(v^*,t)) d^2_\manifold(y,\mu(t)) \nonumber \label{eq:pf-inf-inf-1}
\end{align}
where $v^*$ is between 0 and $v$. There exists $\eta > 0$ such that for $\innerprod{v}{v}^{1/2} < \eta$, 
\begin{equation}
\lambda_{\min}(\frac{\partial^{2}}{\partial v^{2}}G_{\mu(t)}(v^*,t)) \ge C, \label{eq:C}
\end{equation}
by \ref{cond:L2} and the smoothness of $G_{\mu(t)}$, where $C=\lambda_{\min}(\partial^2G_{\mu(t)}/\partial v^2 (0,t))/2$, and the inequality holds uniformly over $t$. This then implies \eqref{eq:M-curvature}.

For \eqref{eq:Qn-curvature}, applying iterated expectations and Taylor's theorem, we  obtain
\begin{equation}
\frac{\partial^{2}}{\partial y^{2}}\tilde{Q}_{h}(y,t) - \frac{\partial^{2}}{\partial y^{2}}M(y,t) = O(h_\mu^2), \label{eq:ddQn-rate}
\end{equation}
where the $O(h_\mu^2)$ term is uniform over $y\in \manifold$ and $t\in \tdomain$,
similar to the proof of the uniform consistency of $\tilde{Q}_{h_\mu}$ to $M(p,t)$ in Lemma~\ref{lem:Qn-consistency}. 
Define $H_{p}(v,t)=\tilde{Q}_{h}(\Exp_{p}v,t)$ and $\tilde{v} = \Log_p(y)$. With \ref{cond:L2} this implies
\begin{equation}
\inf_{t\in\tdomain}\underset{n\rightarrow\infty}{\lim\inf}\,\lambda_{\min}\left(\frac{\partial^{2}}{\partial v^{2}}H_{\tilde{\mu}(t)}(v,t)\mid_{v=0}\right) = \inf_{t\in\tdomain}\underset{n\rightarrow\infty}{\lim\inf}\,\lambda_{\min}\left(\frac{\partial^{2}}{\partial y^{2}}\tilde{Q}_{h}(y,t)\mid_{y=\tilde{\mu}(t)}\right)  >0, \label{eq:inf-inf-2-1}
\end{equation}  
where the inequality is due to \eqref{eq:ddQn-rate}. Then
\begin{align*}
\tilde{Q}_{h}(y,t)-\tilde{Q}_{h}(\tilde{\mu}(t),t) & = H_{\tilde{\mu}(t)}(\tilde{v}, t) - H_{\tilde{\mu}(t)}(0, t)\\
& = \innerprod{\frac{\partial^2}{\partial v^2} H_{\tilde{\mu}(t)}(\tilde{v}^*, t) \tilde{v}}{\tilde{v}}_{\tilde{\mu}(t)}\\
&\ge \lambda_{\min}(\frac{\partial^2}{\partial v^2} H_{\tilde{\mu}(t)}(\tilde{v}^*, t)) \innerprod{\tilde{v}}{ \tilde{v}}_{\tilde{\mu}(t)}\\
& = \lambda_{\min}(\frac{\partial^2}{\partial v^2} H_{\tilde{\mu}(t)}(\tilde{v}^*, t)) d_\manifold^2(y, \tilde{\mu}(t))\\
& = \lambda_{\min}(\frac{\partial^2}{\partial v^2} H_{\tilde{\mu}(t)}(\tilde{v}^*, t)) d_\manifold^2(y, \tilde{\mu}(t))
\end{align*}
By \eqref{eq:C} and \eqref{eq:ddQn-rate}, the last term is not smaller than $C d_\manifold^2(y,\tilde{\mu}(t))$ for large enough $n$. Therefore by taking liminf and infimum over $t$ we obtain \eqref{eq:Qn-curvature}. 
\end{proof}

\begin{lem}
\label{lem:entropy}Suppose $B_{\delta}(p)$ is an open ball centered
at $p\in\manifold$ with radius $\delta>0$, 
and denote the covering number of $B_{\delta}(p)$ with $\epsilon$-balls by $N(\epsilon,B_{\delta}(p),d_{\manifold})$. Then  condition \ref{cond:M0} implies 
\[
\int_{0}^{1}\sup_{t\in\tdomain}\sqrt{1+\log N(\delta\epsilon,B_{\delta}(\mu(t)),d_{\manifold})}\diffop\epsilon=O(1)\quad\text{as}\quad\delta\rightarrow0.
\]
\end{lem}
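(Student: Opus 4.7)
The plan is to reduce the manifold covering number to a Euclidean covering number via normal coordinates at $\mu(t)$, and then combine this with the classical volumetric bound for a Euclidean ball and integrability of $\sqrt{\log(1/\epsilon)}$.

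First, I would establish a uniform bi-Lipschitz property for the exponential map along the mean curve. Since $\manifold$ is a $d$-dimensional, smooth, bounded submanifold of $\real^{D}$ by \ref{cond:M0}, the induced Riemannian metric and its derivatives are uniformly bounded, so the second fundamental form is uniformly controlled on $\manifold$. Standard comparison estimates then yield a radius $r_0>0$ and constants $0<c_1<c_2<\infty$, independent of base point in a compact region of $\manifold$, such that for all $p$ in this region and all $v,w\in\tangentspace p$ with $\|v\|_2,\|w\|_2\le r_0$,
\[
c_1\|v-w\|_2\;\le\;d_{\manifold}(\Exp_p v,\Exp_p w)\;\le\;c_2\|v-w\|_2.
\]
Since $\{\mu(t):t\in\tdomain\}$ lies in a bounded (hence precompact) subset of $\manifold$, these constants may be chosen uniformly in $t\in\tdomain$.

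Second, I would transfer the covering problem into $\real^d$. For every $\delta<r_0/c_2$ and every $t\in\tdomain$, the ball $B_\delta(\mu(t))\subset\manifold$ is contained in $\Exp_{\mu(t)}(B^{\mathrm{Eucl}}_{\delta/c_1}(0))$, where $B^{\mathrm{Eucl}}_r(0)$ is the Euclidean ball of radius $r$ in $\tangentspace{\mu(t)}\cong\real^d$. A standard volume argument covers $B^{\mathrm{Eucl}}_{\delta/c_1}(0)$ by at most $\left(C_0/\epsilon\right)^d$ Euclidean balls of radius $c_1\delta\epsilon/c_2$, and pushing these forward by $\Exp_{\mu(t)}$ gives a covering of $B_\delta(\mu(t))$ by manifold balls of radius $\delta\epsilon$. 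Here $C_0$ depends only on $c_1,c_2,d$, not on $t$ or $\delta$. Hence, uniformly for all sufficiently small $\delta$ and all $\epsilon\in(0,1]$,
\[
\sup_{t\in\tdomain}\log N(\delta\epsilon,B_\delta(\mu(t)),d_{\manifold})\;\le\;d\log C_0+d\log(1/\epsilon).
\]

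Third, plugging this into the integral and interchanging the supremum with the monotone $\sqrt{\cdot}$ gives
\[
\int_{0}^{1}\sup_{t\in\tdomain}\sqrt{1+\log N(\delta\epsilon,B_\delta(\mu(t)),d_{\manifold})}\,\diffop\epsilon
\;\le\;\int_{0}^{1}\sqrt{1+d\log C_0+d\log(1/\epsilon)}\,\diffop\epsilon,
\]
and the right-hand side is a finite constant independent of $\delta$ because $\sqrt{\log(1/\epsilon)}$ is integrable on $(0,1)$. This gives the claimed $O(1)$ bound as $\delta\rightarrow 0$.

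The main obstacle is obtaining the uniform-in-$t$ bi-Lipschitz estimate for $\Exp_{\mu(t)}$ on balls of a fixed small radius; everything else is a standard packing argument. The uniformity is delicate because it requires that the injectivity radius is bounded below and the derivative of the exponential map is uniformly controlled along $\{\mu(t):t\in\tdomain\}$. Boundedness and smoothness of $\manifold$ as an embedded submanifold in $\real^D$ (condition \ref{cond:M0}) give uniform bounds on curvature and second fundamental form, which in turn yield the required uniform control via standard Riemannian comparison theorems.
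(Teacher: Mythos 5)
Your argument is correct. The paper itself gives no direct proof of this lemma---it simply invokes Proposition 3 of \citet{mull:18:3}---and what you have written is in substance a self-contained proof of exactly that proposition's manifold case: a uniform bi-Lipschitz comparison between geodesic balls along $\mu(\tdomain)$ and Euclidean balls in $\tangentspace{\mu(t)}\cong\real^{d}$ via the exponential map, the standard volumetric bound $N(\delta\epsilon,B_{\delta}(\mu(t)),d_{\manifold})\le (C_{0}/\epsilon)^{d}$ with $C_{0}$ independent of $t$ and $\delta$, and the finiteness of $\int_{0}^{1}\sqrt{1+d\log(C_{0}/\epsilon)}\,\diffop\epsilon$. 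The one point worth stating more carefully is the source of the uniformity in $t$: boundedness of $\manifold$ in $\real^{D}$ does not by itself give compactness of $\manifold$ or a global lower bound on the injectivity radius, but the set $\{\mu(t):t\in\tdomain\}$ is the continuous image of the compact $\tdomain$ and hence compact, and your bi-Lipschitz constants are only needed on a fixed small neighborhood of this compact set, which is all the argument uses.
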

\begin{proof}
This is a consequence of Proposition 3 of \citet{mull:18:3}. 
\end{proof}

\begin{lem}
\label{lem:variance-rate}Suppose $m_{i}=m$ and \ref{cond:M0}, \ref{cond:K0},
\ref{cond:L0}, \ref{cond:L1}, \ref{cond:X0}, \ref{cond:X1} and \ref{cond:X2} hold. If $h_{\mu}\rightarrow 0$
and $nmh_{\mu}\rightarrow\infty$, then
\[
\sup_{t\in\tdomain}d_{\manifold}^{2}(\hat{\mu}(t),\tilde{\mu}(t))=\Op\left(\frac{\log n}{n}+\frac{\log n}{nmh_{\mu}}\right).
\]
\end{lem}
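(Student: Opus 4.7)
The plan is to combine the quadratic‐growth lower bound \eqref{eq:Qn-curvature} from Lemma \ref{lem:L0-L2} with a uniform concentration bound for $Q_n - \tilde{Q}_{h_\mu}$ on a small neighborhood of $\tilde{\mu}(t)$. By the basic inequality $Q_n(\hat{\mu}(t),t)\le Q_n(\tilde{\mu}(t),t)$ and \eqref{eq:Qn-curvature}, and after using \eqref{eq:tmu-consistency} together with a crude uniform bound to guarantee that $\hat{\mu}(t)$ lies in the radius-$\eta$ ball $B_\eta(\tilde{\mu}(t))$ uniformly in $t$ with probability tending to one, we obtain
\[
C\, d_{\manifold}^{2}(\hat{\mu}(t),\tilde{\mu}(t)) \;\le\; \tilde{Q}_{h_\mu}(\hat{\mu}(t),t) - \tilde{Q}_{h_\mu}(\tilde{\mu}(t),t) \;\le\; 2\sup_{y\in B_\eta(\tilde{\mu}(t))}\bigl|Q_n(y,t)-\tilde{Q}_{h_\mu}(y,t)\bigr|.
\]
Thus the problem reduces to establishing a uniform-in-$(y,t)$ deviation bound of order $\Op(\log n / n + \log n/(nmh_\mu))$ for $|Q_n-\tilde Q_{h_\mu}|^2$ on $B_\eta(\tilde{\mu}(t))\times\tdomain$.

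Next, I would split the deviation as $(Q_n-Q_n^\star)+(Q_n^\star-\tilde Q_{h_\mu})$, where $Q_n^\star$ is the version of $Q_n$ obtained by replacing the data-dependent local-linear weights $\hat{\omega}_{ij}(t,h_\mu)$ by the deterministic weights $\omega(T_{ij},t,h_\mu)$ built from the population quantities $u_k$ and $\sigma_0^2$. The first piece $Q_n-Q_n^\star$ is controlled by the uniform errors $\sup_t|\hat u_k(t)-u_k(t)|$ and $\sup_t|\hat\sigma_0^2(t)-\sigma_0^2(t)|$; using \ref{cond:K0}, the boundedness of $\manifold$ from \ref{cond:M0}, the design-density regularity from \ref{cond:L1}, and a Bernstein-type concentration on the $nm$ conditionally independent kernel-weighted summands, these uniform errors are $\Op\bigl(\sqrt{\log n/(nmh_\mu)}\bigr)$ (noting that the denominator $\hat\sigma_0^2$ is bounded away from zero in view of \eqref{eq:uu-rate}). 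Squaring gives a contribution of the claimed order.

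For the second piece $Q_n^\star - \tilde Q_{h_\mu}$, I would use a uniform empirical-process argument in which the $n$ subjects serve as the independent sampling units and the $m$ within-subject observations are handled together. Pointwise in $(y,t)$, a Bernstein bound applied to the sum $n^{-1}\sum_{i=1}^n \xi_i(y,t)$, with $\xi_i(y,t)=m^{-1}\sum_{j=1}^m \omega(T_{ij},t,h_\mu)\{d_\manifold^2(Y_{ij},y)-\mathbb{E}d_\manifold^2(Y_{ij},y)\}$, yields a rate of $\sqrt{\log n/n}+\sqrt{\log n/(nmh_\mu)}$: the $1/n$ term reflects between-subject variance of the bounded $\xi_i$ (which cannot be beaten by within-subject averaging of dependent summands), while the $1/(nmh_\mu)$ term captures the variance reduction from averaging $m$ within-subject independent kernel-weighted pieces whose individual variance inflates by the usual $1/h_\mu$ factor. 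Chaining over the compact index set $B_\eta(\tilde{\mu}(t))\times\tdomain$ using the covering-number bound of Lemma \ref{lem:entropy} and standard peeling converts the pointwise bound to a uniform one at the price of only the $\log n$ factors already present.

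The main technical obstacle will be the uniform empirical-process step on the manifold-valued index set while correctly separating the two distinct "effective sample size" regimes—$n$ (subject level) versus $nmh_\mu$ (kernel-smoothed observation level)—in the Bernstein variance proxy. This requires decomposing $\mathrm{Var}(\xi_i(y,t))$ into a $O(1/m)$ piece coming from the between-curve variability of the weighted integrand plus a $O(1/(m^2 h_\mu))$ piece coming from the kernel weighting, then applying the entropy estimate of Lemma \ref{lem:entropy} together with the Lipschitz property of $d_\manifold^2(\cdot,y)$ in $y$ to chain over $B_\eta(\tilde{\mu}(t))$. Once both pieces are assembled, squaring and combining with the quadratic-growth inequality yields the stated rate $\Op(\log n/n+\log n/(nmh_\mu))$.
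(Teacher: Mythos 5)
Your ingredients largely coincide with the paper's: the paper also splits the deviation into a weight-replacement term (its $A_1$, controlled by $\sup_t|\hat u_k(t)-u_k(t)|$) and a centered empirical process with deterministic weights (its $A_2$), uses a Bernstein-type bound separating the subject-level $1/n$ variance from the kernel-smoothed $1/(nmh_\mu)$ variance, and invokes Lemma~\ref{lem:entropy} for chaining over the manifold-valued index set. The gap is in the final assembly. Your reduction $C\,d_{\manifold}^2(\hat{\mu}(t),\tilde{\mu}(t)) \le 2\sup_{y\in B_\eta(\tilde{\mu}(t))}|Q_n(y,t)-\tilde Q_{h_\mu}(y,t)|$ bounds $d_{\manifold}^2$ by the \emph{first} power of a uniform deviation taken over a ball of \emph{fixed} radius $\eta$; since that deviation is $\Op\bigl(\sqrt{\log n/n}+\sqrt{\log n/(nmh_\mu)}\bigr)$, this chain only delivers $\sup_t d_{\manifold}^2(\hat{\mu}(t),\tilde{\mu}(t))=\Op\bigl(\sqrt{\log n/n}+\sqrt{\log n/(nmh_\mu)}\bigr)$, the square root of the claimed rate. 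The step ``squaring gives the claimed order'' is not available to you: the quadratic-growth inequality \eqref{eq:Qn-curvature} relates $d_{\manifold}^2$ to the deviation itself, not to its square, so squaring the deviation bound does not square the resulting rate for $d_{\manifold}^2$.

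To recover the sharp rate you must localize the empirical process. Work with the increment $S_n(y,t)-S_n(\tilde{\mu}(t),t)$, where $S_n=Q_n-\tilde Q_{h_\mu}$, whose summands involve $D_{ij}(y,t)=d_{\manifold}^2(Y_{ij},y)-d_{\manifold}^2(Y_{ij},\tilde{\mu}(t))$; by the Lipschitz property of $d_{\manifold}^2(\cdot,q)$ on a bounded manifold, $|D_{ij}(y,t)|=O(\delta)$ uniformly when $d_{\manifold}(y,\tilde{\mu}(t))<\delta$. Both pieces of the increment are then \emph{linear in} $\delta$: $\sup A_1=\Op\bigl(\delta\sqrt{\log n/(nmh_\mu)}\bigr)$ and $\expect\sup A_2 = O\bigl(\delta\sqrt{\log n}\sqrt{1/n+1/(nmh_\mu)}\bigr)$. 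Peeling over dyadic shells $\{y: 2^{k-1}\le r_n d_{\manifold}(y,\tilde{\mu}(t))\le 2^k\}$ with $r_n^{-1}=\sqrt{\log n}\sqrt{1/n+1/(nmh_\mu)}$, and balancing the quadratic lower bound $C\delta^2$ against the linear-in-$\delta$ increment bound, yields $d_{\manifold}(\hat{\mu}(t),\tilde{\mu}(t))=\Op(r_n^{-1})$ uniformly in $t$, which is the stated rate for $d_{\manifold}^2$. This is exactly what the paper does; in your write-up peeling appears only as a device to pass from pointwise to uniform bounds over the fixed ball $B_\eta$, which does not perform this balancing, so as written the argument falls short of the lemma by a square root.
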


\begin{proof}

We first establish 
\begin{equation}
\sup_{t\in\tdomain}|\hat{u}_{k}(t)-u_{k}(t)|=\Op\left(\sqrt{\log n}\sqrt{h_{\mu}^{2k-1}n^{-1}m^{-1}+h_{\mu}^{2k}n^{-1}}\right),  \label{eq:u-rate}
\end{equation}
with proof analogous to that of Lemma 5 of \citet{Zhang2016}.
Following a similar argument as in  Lemma 2 in \citet{mull:18:3}, 
by \eqref{eq:u-rate}, Lemma~\ref{lem:L0-L2} and  \ref{lem:entropy},
one obtains
\begin{equation}
\sup_{t\in\tdomain}d_{\manifold}^{2}(\hat{\mu}(t),\tilde{\mu}(t))=\op(1).\label{eq:uniform-consistency}
\end{equation}

To derive the rate of convergence, we set $\hat{\omega}_{ij}(t)=K_{h}(T_{ij}-t)\{\hat{u}_{0}-\hat{u}_{1}(T_{ij}-t)\}/\hat{\sigma}_{0}^{2}$
and $\tilde{\omega}_{ij}(t)=K_{h}(T_{ij}-t)\{u_{0}-u_{1}(T_{ij}-t)\}/\sigma_{0}^{2}$.
With $S_{n}(y,t)=\hat{Q}_{n}(y,t)-\tilde{Q}_{h}(y,t)$ and $D_{ij}(y,t)=d_{\manifold}^{2}(Y_{ij},y)-d_{\manifold}^{2}(Y_{ij},\tilde{\mu}(t))$,
\begin{align*}
|S_{n}(y,t)-S_{n}(\tilde{\mu}(t),t)| & \leq\left|\frac{1}{nm}\sum_{i=1}^{n}\sum_{j=1}^{m}\{\hat{\omega}_{ij}(t)-\tilde{\omega}_{ij}(t)\}D_{ij}(y,t)\right|\\
& \quad +\left|\frac{1}{nm}\sum_{i=1}^{n}\sum_{j=1}^{m}[\tilde{\omega}_{ij}(t)D_{ij}(y,t)-\expect\{\tilde{\omega}_{ij}(t)D_{ij}(y,t)\}]\right|\\
& \equiv A_{1}(y,t)+A_{2}(y,t).
\end{align*}
For any $\delta>0$, using the boundedness of $d_{\manifold}$ and
\eqref{eq:u-rate}, one can deduce that \newline $\sup_{t\in\tdomain}\sup_{d_{\manifold}(y,\tilde{\mu}(t))<\delta}A_{1}(y,t)=\Op(\delta\sqrt{\log n}/\sqrt{nmh})$,
with a universal constant for all $\delta>0$. Thus, for 
\[
B_{R}=\left\{ \sup_{t\in\tdomain}\sup_{d_{\manifold}(y,\tilde{\mu}(t))<\delta}\left|\frac{1}{nm}\sum_{i=1}^{n}\sum_{j=1}^{m}\{\hat{\omega}_{ij}(t)-\tilde{\omega}_{ij}(t)\}D_{ij}(y,t)\right|\leq R\delta\sqrt{\log n}/\sqrt{nmh}\right\} 
\]
for some $R>0$, it holds  that $\prob(B_{R}^{c})\rightarrow0$. For the
second term, we employ a similar argument as in Lemma 5 of \citet{Zhang2016}
to show that $\expect\sup_{t\in\tdomain}\sup_{d_{\manifold}(y,\tilde{\mu}(t))<\delta}A_{2}(y,t)=O(\delta\sqrt{\log n}\sqrt{1/n+1/nmh})$. 
Thus,
\[
\expect\left\{ I_{B_{R}}\sup_{t\in\tdomain}\sup_{d_{\manifold}(y,\tilde{\mu}(t))<\delta}|S_{n}(y,t)-S_{n}(\tilde{\mu}(t),t)|\right\} \leq a\delta\sqrt{\log n}\sqrt{1/n+1/nmh},
\]
where $a$ is a constant depending on $R$. To finish, set $r_{n}=(\sqrt{\log n}\sqrt{1/n+1/nmh})^{-1}$
and define $S_{k,n}(t)=\{y:2^{k-1}\leq r_{n}d(y,\tilde{\mu}(t))\leq2^{k}\}$.
Let $\eta$ be as in Lemma \ref{lem:L0-L2}, and $\tilde{\eta}=\eta/2$.
Then for any positive integer $W$,
\begin{align*}
\prob\left(\sup_{t\in\tdomain}r_{n}d_{\manifold}(\tilde{\mu}(t),\hat{\mu}(t))>2^{W}\right)\leq & \prob(B_{R}^{c})+\prob(2\sup_{t\in\tdomain}d_{\manifold}(\tilde{\mu}(t),\hat{\mu}(t))>\eta_{2})\\
 & \hspace{-2cm} +\sum_{\stackrel{k\geq W}{2^{k}\leq r_{n}\tilde{\eta}}}\prob\left[\left\{ \sup_{t\in\tdomain}\sup_{y\in S_{k,n}(t)}|S_{n}(y,t)-S_{n}(\tilde{\mu}(t),t)|\geq c\frac{2^{2(k-1)}}{r_{n}^{2}}\right\} \cap B_{R}\right],
\end{align*}
where $c>0$ is some constant, and the second term goes to zero for
any $\eta>0$ according to (\ref{eq:uniform-consistency}). Since
$d_{\manifold}(y,\tilde{\mu}(t))\leq2^{k}/r_{n}$ on $S_{k,n}(t)$,
this implies that the sum on the right-hand side of the above inequality
is bounded by 
\[
4ac\sum_{\stackrel{k\geq W}{2^{k}\leq r_{n}\tilde{\eta}}}\frac{2^{-k}}{r_{n}^{-1}}\sqrt{\log n}\sqrt{1/n+1/nmh}\leq\sum_{k\geq W}2^{k} \rightarrow 0
\]
as $W\rightarrow 0$. Therefore, 
\[
\sup_{t\in\tdomain}d_{\manifold}^{2}(\hat{\mu}(t),\tilde{\mu}(t))=\Op\left(\frac{\log n}{n}+\frac{\log n}{nmh_{\mu}}\right).
\]
\end{proof}

\single

\references

\end{document}